\PassOptionsToPackage{noend}{algorithmic}

\documentclass[a4paper,UKenglish,cleveref, autoref, thm-restate,notab, nosubfigcap,nolineno]{socg-lipics-v2021}
\hideLIPIcs






\usepackage{amsmath}
\usepackage{mathtools}
\usepackage{tikz}
\usetikzlibrary{matrix,arrows,cd,calc}
\usepackage{mathbbol, graphicx, float}
\usepackage{algorithm}
\usepackage{algpseudocode}
\algrenewcommand\algorithmicrequire{\textbf{Input:}}
\algrenewcommand\algorithmicensure{\textbf{Output:}}
\usepackage{setspace}
\usepackage{enumitem}

\usepackage{subcaption}
\usepackage{graphicx}
\graphicspath{{figures/}}
\usepackage{amsmath}
\usepackage{tikz}

\bibliographystyle{plainurl}

\title{Toroidal Coordinates:  Decorrelating Circular Coordinates With Lattice Reduction}
\titlerunning{Toroidal Coordinates:  Decorrelating Circular Coordinates With Lattice Reduction}

\author{Luis Scoccola}{Department of Mathematics, Northeastern University, USA \and \url{https://luisscoccola.com}}{l.scoccola@northeastern.edu}{https://orcid.org/0000-0002-4862-722X}{supported by the NSF through
grants CCF-2006661 and CAREER award DMS-1943758}

\author{Hitesh Gakhar}{Department of Mathematics, The University of Oklahoma, USA \and \url{https://hiteshgakhar.com}}{hiteshgakhar@ou.edu}{https://orcid.org/0000-0001-7728-6738}{}

\author{Johnathan Bush}{Department of Mathematics, University of Florida, USA \and \url{https://people.clas.ufl.edu/bush-j/}}{bush.j@ufl.edu}{https://orcid.org/0000-0002-6404-8324}{supported by the NSF-Simons Southeast Center for Mathematics and Biology through NSF grant
DMS-1764406 and Simons Foundation grant 594594}

\author{Nikolas Schonsheck}{Department of Mathematical Sciences, University of Delaware, USA \and \url{https://niko-schonsheck.github.io/}}{nischon@udel.edu}{https://orcid.org/0000-0002-6177-4865}{supported by the Air Force Office
of Scientific Research through award number FA9550-21-1-0266}

\author{Tatum Rask}{Department of Mathematics, Colorado State University, USA \and \url{https://sites.google.com/view/tatumrask/} }{tatum.rask@colostate.edu}{}{}

\author{Ling Zhou}{Department of Mathematics, The Ohio State University, USA \and \url{https://www.ling-zhou.com/} }{zhou.2568@osu.edu}{https://orcid.org/0000-0001-6655-5162}{}

\author{Jose A. Perea}{Department of Mathematics and Khoury College of Computer Sciences, Northeastern University, USA \and \url{https://www.joperea.com/} }{j.pereabenitez@northeastern.edu}{https://orcid.org/0000-0002-6440-5096}{supported by the NSF through grants CCF-2006661 and CAREER award DMS-1943758}

\authorrunning{L. Scoccola, H. Gakhar, J. Bush, N. Schonsheck, T. Rask, L. Zhou, and J. A. Perea} 

\Copyright{Luis Scoccola, Hitesh Gakhar, Johnathan Bush, Nikolas Schonsheck, Tatum Rask, Ling Zhou, and Jose A. Perea} 

\ccsdesc[500]{Mathematics of computing~Algebraic topology}

\keywords{dimensionality reduction, lattice reduction, Dirichlet energy, harmonic, cocycle} 

\category{} 

\supplement{Proof-of-concept implementation at \url{https://github.com/LuisScoccola/DREiMac} \cite{implementation-2}}

\funding{This material is based upon work initiated during the 2022 Mathematics Research Community
\emph{Data Science at the Crossroads of Analysis, Geometry, and Topology}
supported by the National Science Foundation under Grant Number DMS 1916439.}



\EventEditors{Erin W. Chambers and Joachim Gudmundsson}
\EventNoEds{2}
\EventLongTitle{39th International Symposium on Computational Geometry
(SoCG 2023)}
\EventShortTitle{SoCG 2023}
\EventAcronym{SoCG}
\EventYear{2023}
\EventDate{June 12--15, 2023}
\EventLocation{Dallas, Texas, USA}
\EventLogo{}
\SeriesVolume{258}
\ArticleNo{XX}

\def\noteson{\gdef\todo##1{\noindent{\color{blue}[to do: ##1]}}}

\noteson

\def\thoughtson{\gdef\thoughts##1{\noindent{\color{red}[thoughts: ##1]}}}

\thoughtson

\theoremstyle{definition}
\newtheorem{problem}[theorem]{Problem}

\newtheorem{construction}[theorem]{Construction}
\newtheorem{pipeline}[theorem]{Pipeline}

\DeclareMathAlphabet{\mathpzc}{OT1}{pzc}{m}{it}


\newcommand{\define}[1]{\emph{#1}}

\DeclareMathOperator*{\argmin}{argmin}

\newcommand{\estdirichlet}{\widehat{D}}
\newcommand{\innerprod}{{ \langle\text{-} \,, \text{-}\rangle}}
\newcommand{\SMV}{{\mathsf{dSMV}}}
\newcommand{\harmrep}{\mathtt{harmonicRepresentative}}
\newcommand{\lowenergyreps}{\mathtt{lowEnergyRepresentatives}}
\renewcommand{\lll}{\mathtt{LLL}}

\newcommand{\circcoords}{\mathtt{cc}}
\newcommand{\sparsecirccoords}{\mathtt{scc}}
\newcommand{\torcoords}{\mathtt{tc}}
\newcommand{\sparsetorcoords}{\mathtt{stc}}
\newcommand{\integrate}{\mathtt{integrate}}
\newcommand{\sparseintegrate}{\mathtt{sparseIntegrate}}

\newcommand{\tangent}{T}
\newcommand{\trace}{\mathsf{Tr}}

\renewcommand{\Im}{\mathsf{Im}}
\newcommand{\proj}{\mathsf{proj}}

\newcommand{\rips}{\mathsf{VR}}

\newcommand{\harco}{\mathcal{H}^1}

\newcommand{\simphom}{\mathsf{H}^1}

\newcommand{\simpco}{\mathsf{Z}^1}

\newcommand{\simpchzero}{\mathsf{C}^0}

\newcommand{\torus}{\mathbb{T}}
\renewcommand{\circle}{\mathbb{S}^1}

\newcommand{\dsf}{\mathsf{d}}

\newcommand{\Mcal}{\mathcal{M}}
\newcommand{\Ncal}{\mathcal{N}}
\newcommand{\Nbb}{\mathbb{N}}

\newcommand{\Qbb}{\mathbb{Q}}
\newcommand{\Rbb}{\mathbb{R}}

\newcommand{\Ucal}{\mathcal{U}}

\newcommand{\Zbb}{\mathbb{Z}}

\renewcommand{\epsilon}{\varepsilon}
\renewcommand{\phi}{\varphi}

\renewcommand{\to}{\xrightarrow{\;\;\;\;}}

\begin{document}

%
%
%
%
%

\maketitle

\begin{abstract}
    The circular coordinates algorithm of de Silva, Morozov, and Vejdemo-Johansson takes as input a dataset together with a cohomology class representing a $1$-dimensional hole in the data; the output is a map from the data into the circle that captures this hole, and that is of minimum energy in a suitable sense.
    However, when  applied to several cohomology classes, the output circle-valued maps can be ``geometrically correlated'' even if the chosen cohomology classes are linearly independent.
    It is shown in the original work that less correlated maps can be obtained with suitable integer linear combinations of the cohomology classes, with the linear combinations being chosen by inspection.
    In this paper, we identify a formal notion of geometric correlation between circle-valued maps which, in the Riemannian manifold case, corresponds to the Dirichlet form, a bilinear form derived from the Dirichlet energy.
    We describe a systematic procedure for constructing low energy torus-valued maps on data, starting from a set of linearly independent cohomology classes.
    We showcase our procedure with computational examples.
    Our main algorithm is based on the Lenstra--Lenstra--Lovász algorithm from computational number theory.
\end{abstract}

\section{Introduction}
\label{section:introduction}

\subparagraph{Motivation and problem statement}
Given a point cloud $X \subseteq \Rbb^n$ concentrated around a $k$-dimensional linear subspace, linear dimensionality reduction algorithms such as Principal Component Analysis are effective at finding a low-dimensional representation $X \to \Rbb^k$ of the data that preserves the linear structure.
The problem of finding low-dimensional representations of non-linear data is more involved; one reason being that it is often hard to make principled assumptions about which particular non-linear shape the data may have.
Topological Data Analysis provides tools allowing for the extraction of qualitative and quantitative topological information from discrete data.
These tools include persistent cohomology, which can be used, in particular, to identify circular features.

Given a dataset $X$ and a class $\alpha$ in the first integral persistent cohomology group of $X$, the circular coordinates algorithm of \cite{silva-vejdemo,silva-morozov-vejdemo} constructs a circle-valued representation $\circcoords_\alpha : X \to \circle$, which preserves the cohomology class $\alpha$ in a precise sense \cite[Theorem~3.2]{perea2020sparse}.
The circular coordinates algorithm is thus a principled non-linear dimensionality reduction algorithm, and has found various applications \cite{luo-kim-patania-vejdemo,wang-summa-pascucci-vejdemo}, particularly in neuroscience \cite{kang-xu-morozov,gardner-et-al,rybakken-baas-dunn}.

As observed in \cite[Section~3.9]{silva-morozov-vejdemo}, and reproduced in \cref{fig:gen2}, when several 
cohomology classes $\alpha_1, \dots, \alpha_k$ are used to produce a single torus-valued representation $(\circcoords_{\alpha_1}, \dots, \circcoords_{\alpha_k}) : X \to \circle \times \dots \times \circle = \torus^k$, this representation is often not the most natural. Indeed, even when the cohomology classes $\alpha_i$ are linearly independent (l.i.), the maps $\circcoords_{\alpha_i}$ can be ``geometrically correlated.'' Certain integer linear combinations of the cohomology classes, however, can yield decorrelated representations.
The problems of defining an appropriate notion of geometric correlation between circle-valued maps, and of using this notion to systematically decorrelate sets of circle-valued maps are left open in \cite{silva-morozov-vejdemo}.
In this paper, we address these two problems.

\begin{figure}[!htb]
\centering
\begin{tikzpicture}
  \node[inner sep=0pt,outer sep=0pt] at (0,0) (a){\includegraphics[width=.2\linewidth]{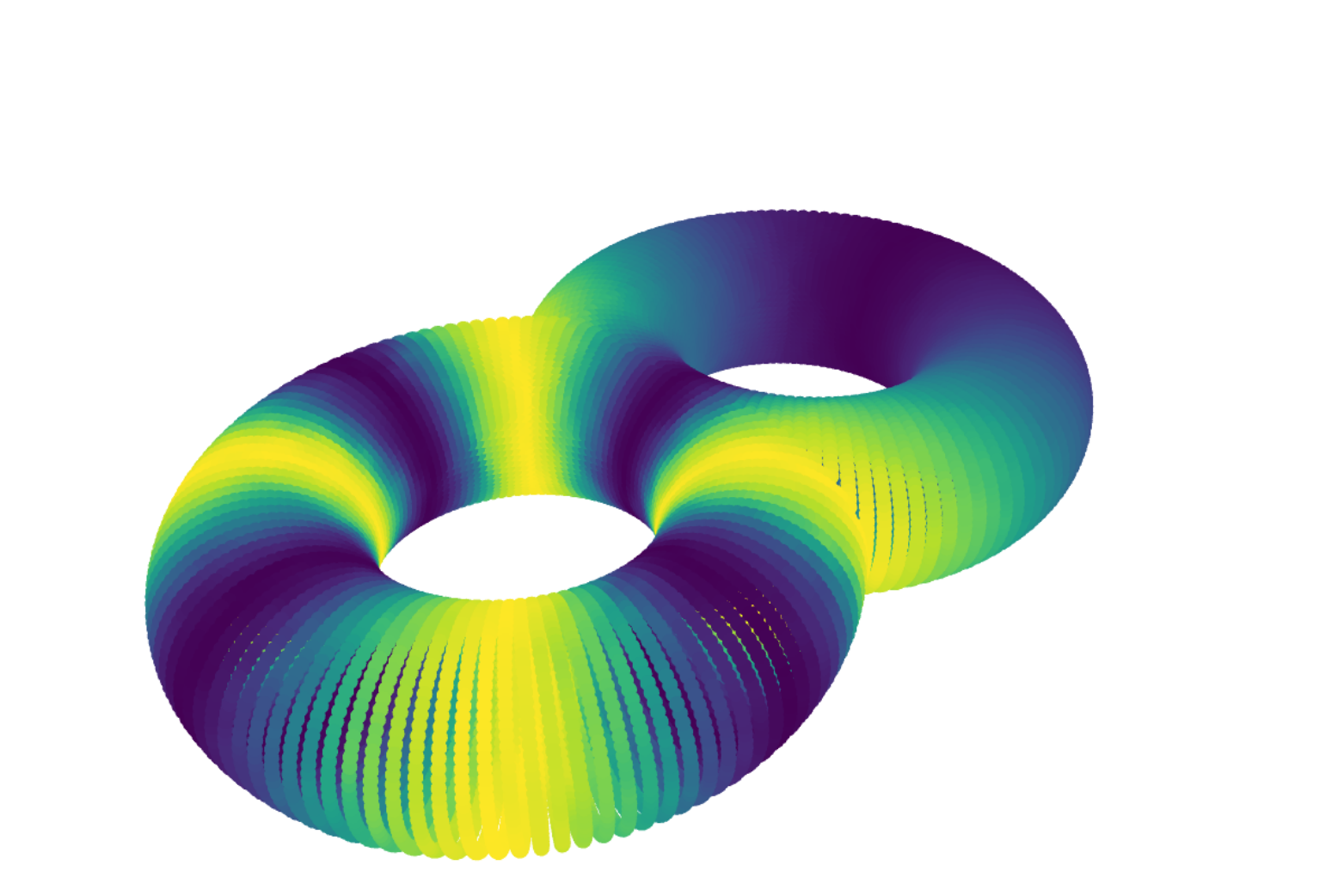}};
  \node[inner sep=0pt,outer sep=0pt] at (4,0) (b){\includegraphics[width=.133\linewidth]{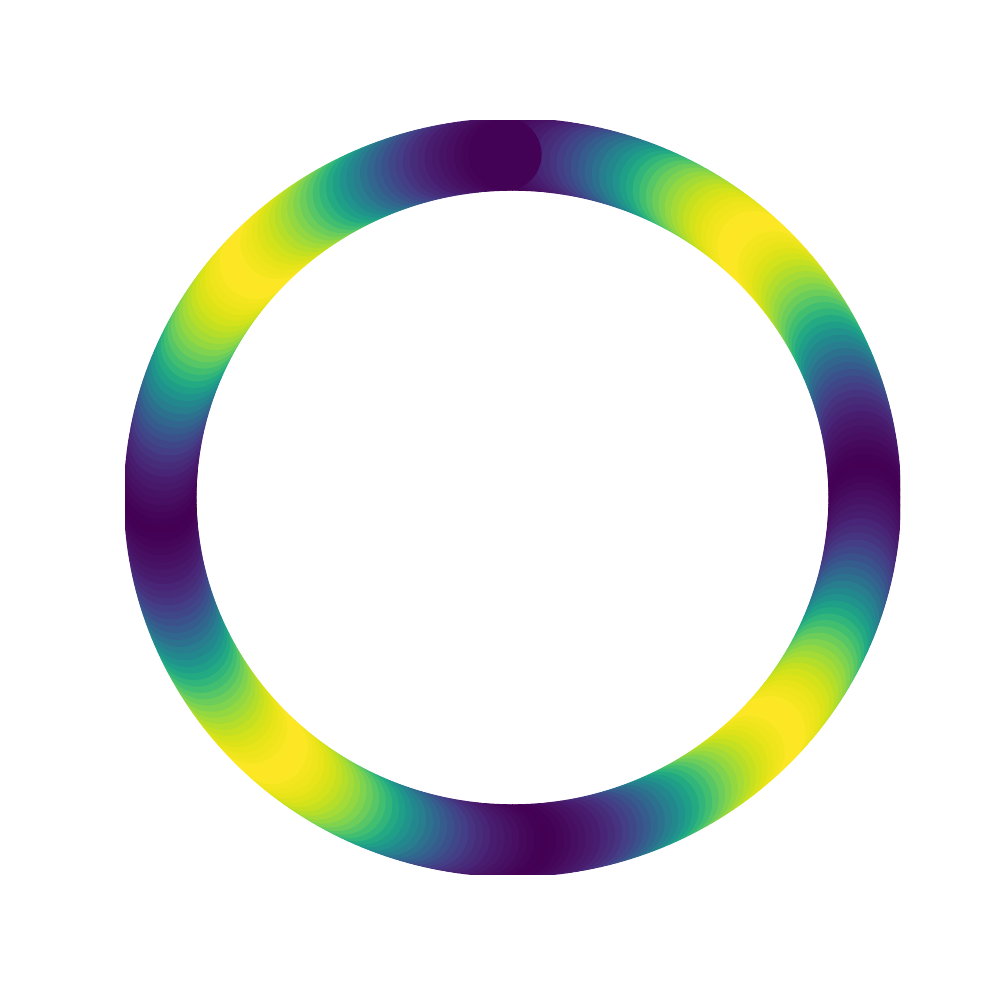}};
  \draw[line width=1pt,-stealth] ([xshift=2mm]a.east) -- ([xshift=-2mm]b.west)node[midway,above,text=black]{};
\end{tikzpicture}
\hfill
\begin{tikzpicture}
  \node[inner sep=0pt,outer sep=0pt] at (0,0) (a){\includegraphics[width=.2\linewidth]{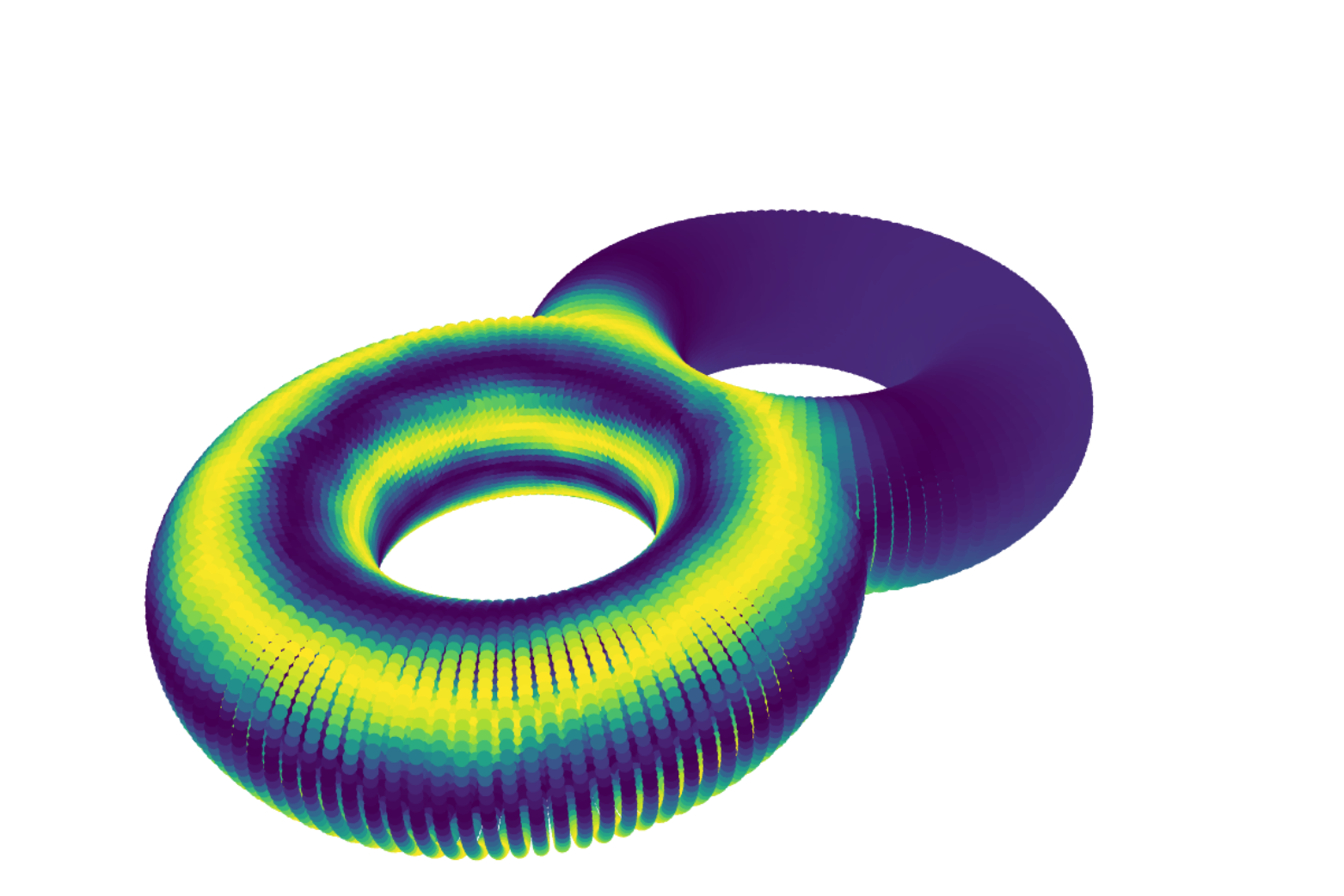}};
  \node[inner sep=0pt,outer sep=0pt] at (4,0) (b){\includegraphics[width=.133\linewidth]{figures/other_possibility_genus2/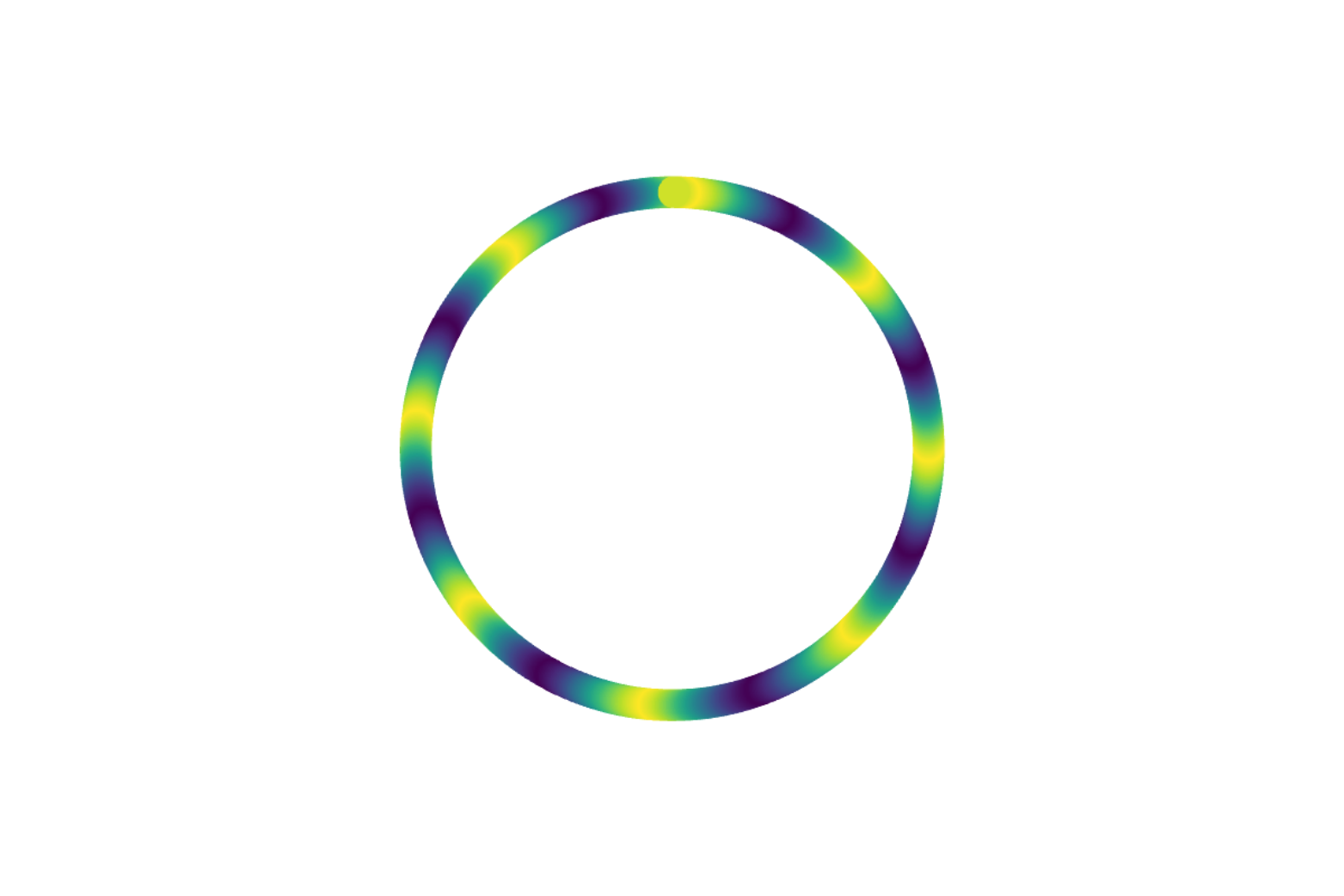}};
  \draw[line width=1pt,-stealth] ([xshift=2mm]a.east) -- ([xshift=-2mm]b.west)node[midway,above,text=black] {};
\end{tikzpicture}

\caption{An illustration of how we use colors to display circular coordinates on data.
We first color the circle $\circle$ with a smooth transition between yellow and violet, repeated four times; then, given a function into the circle, we color its domain by pulling back the coloring.
Depicted are the colorings induced on a genus two surface by the map that ``goes around a longitude'' (Left) and by the map that ``goes around a meridian'' (Right).}
\label{figure:example-coloring}
\vspace{0.8cm}

\begin{tikzpicture}
  \node[inner sep=0pt,outer sep=0pt, align=center] (a) at (0,0){
    \footnotesize
    Circular coordinates\\
    \,\\
  \includegraphics[width=0.09\linewidth]{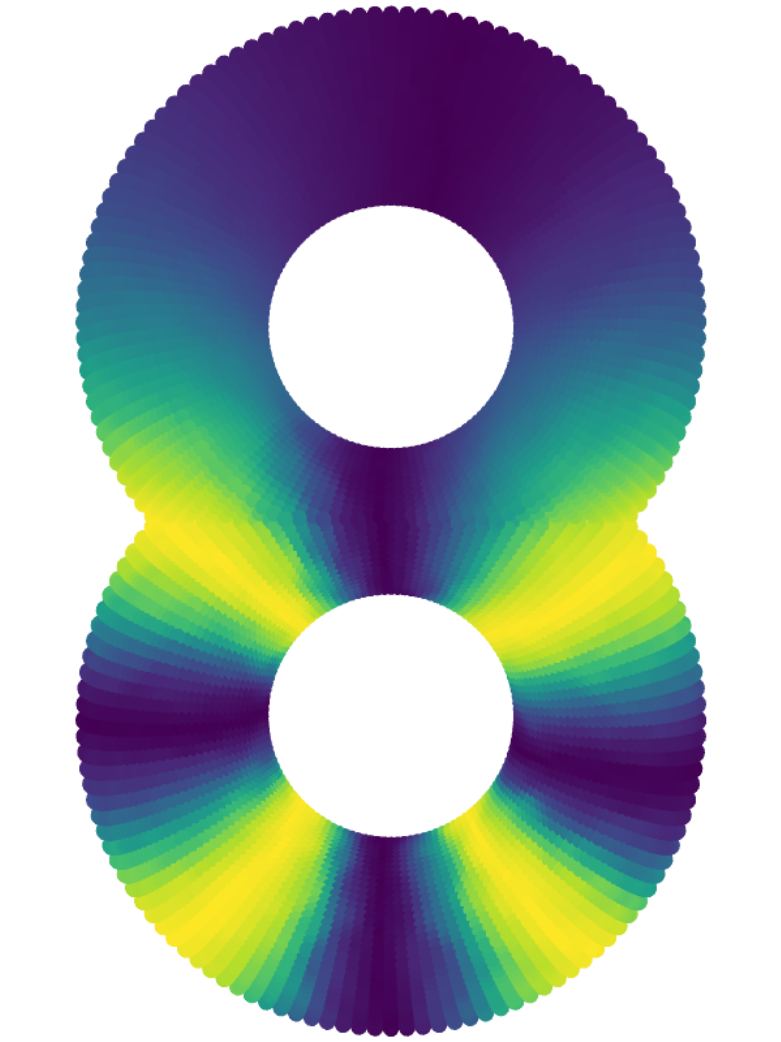}
  \includegraphics[width=0.09\linewidth]{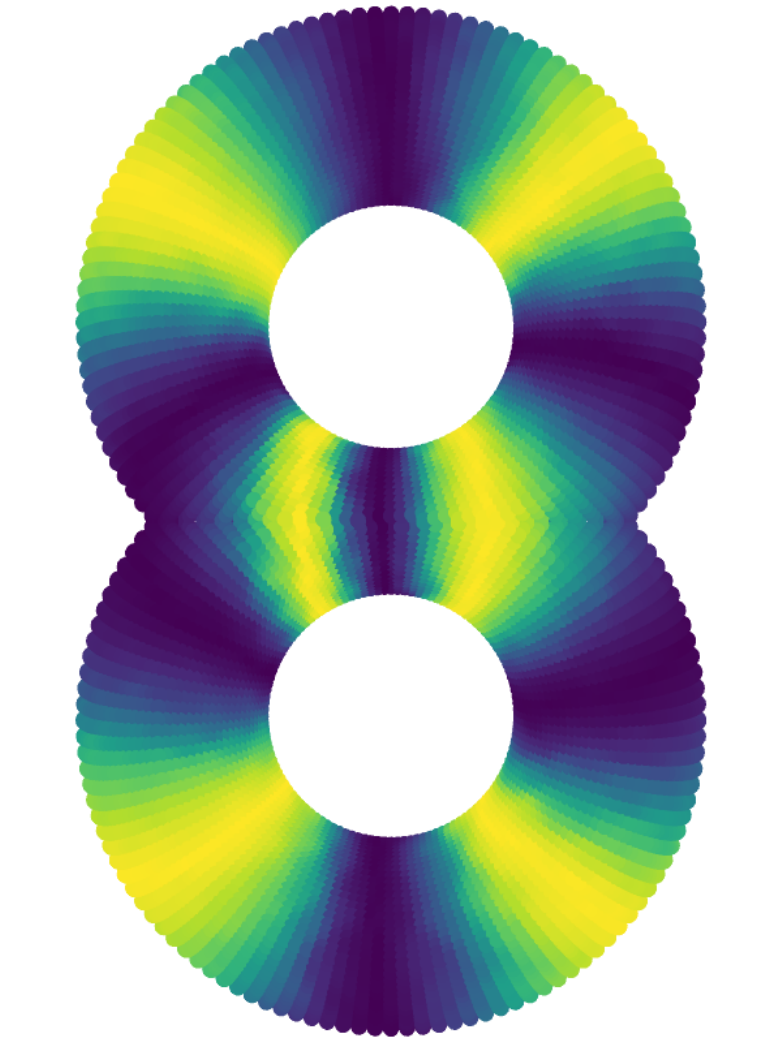}
  \includegraphics[width=0.09\linewidth]{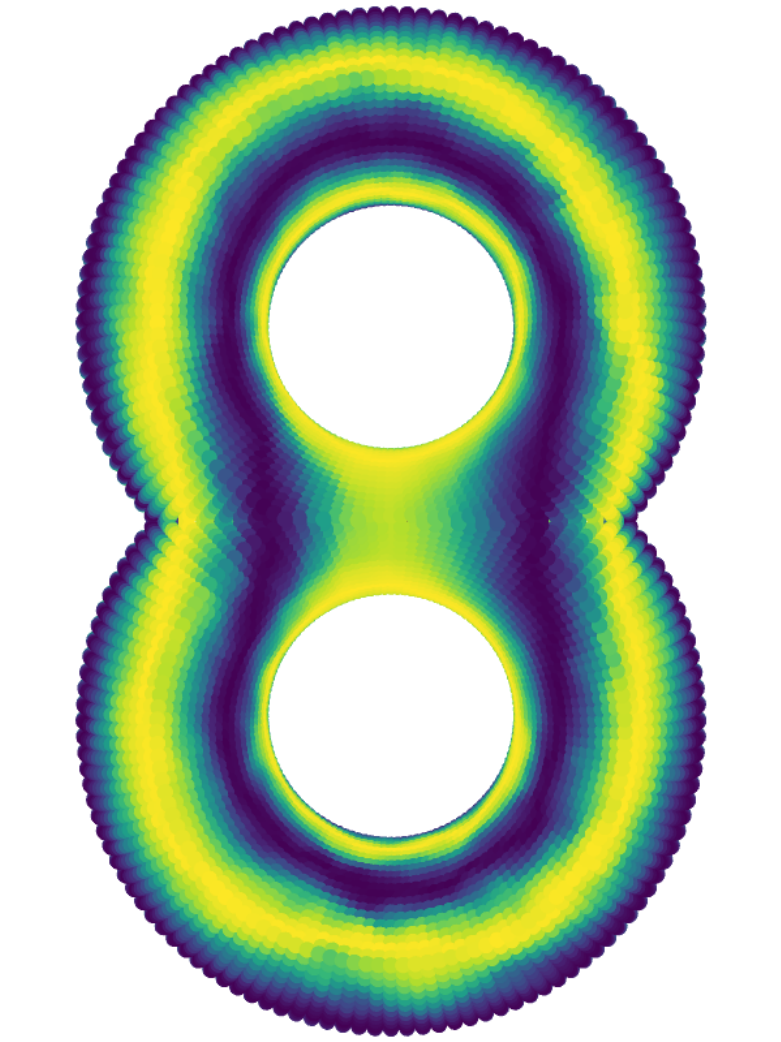}
  \includegraphics[width=0.09\linewidth]{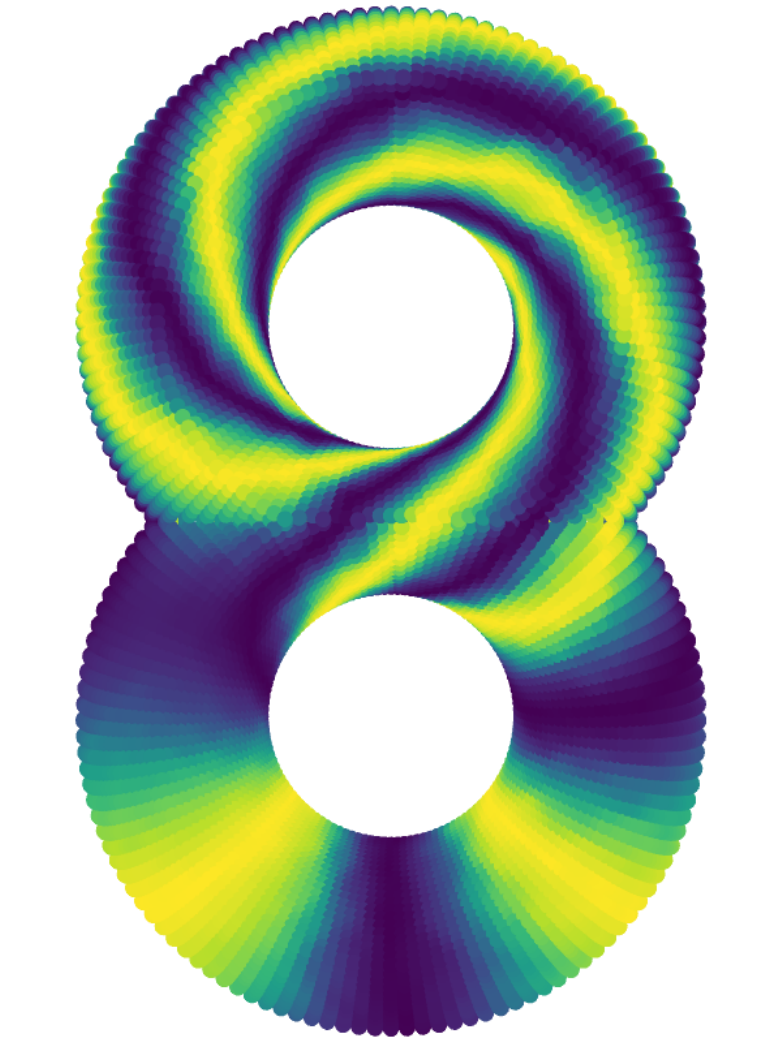}
  };
  \node[inner sep=0pt,outer sep=0pt, align=center] (b) at (8.5,0){
    \footnotesize
    Toroidal coordinates\\
    \,\\
  \includegraphics[width=0.09\linewidth]{figures/other_possibility_genus2/gen2_2.pdf}
  \includegraphics[width=0.09\linewidth]{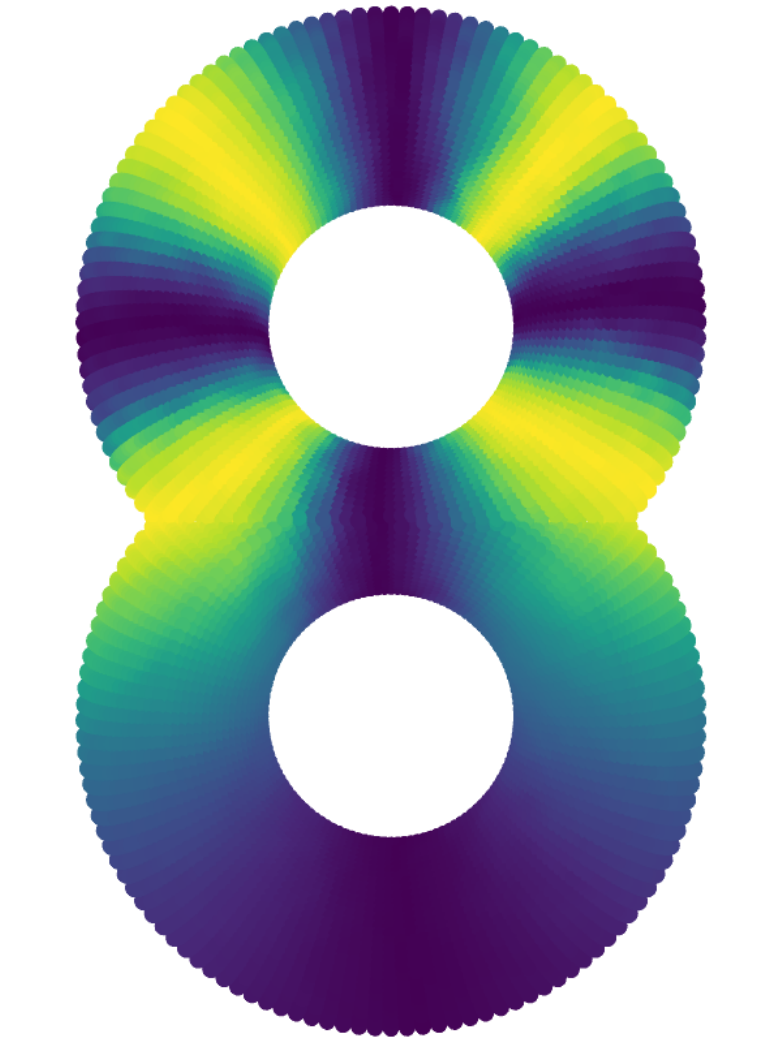}
  \includegraphics[width=0.09\linewidth]{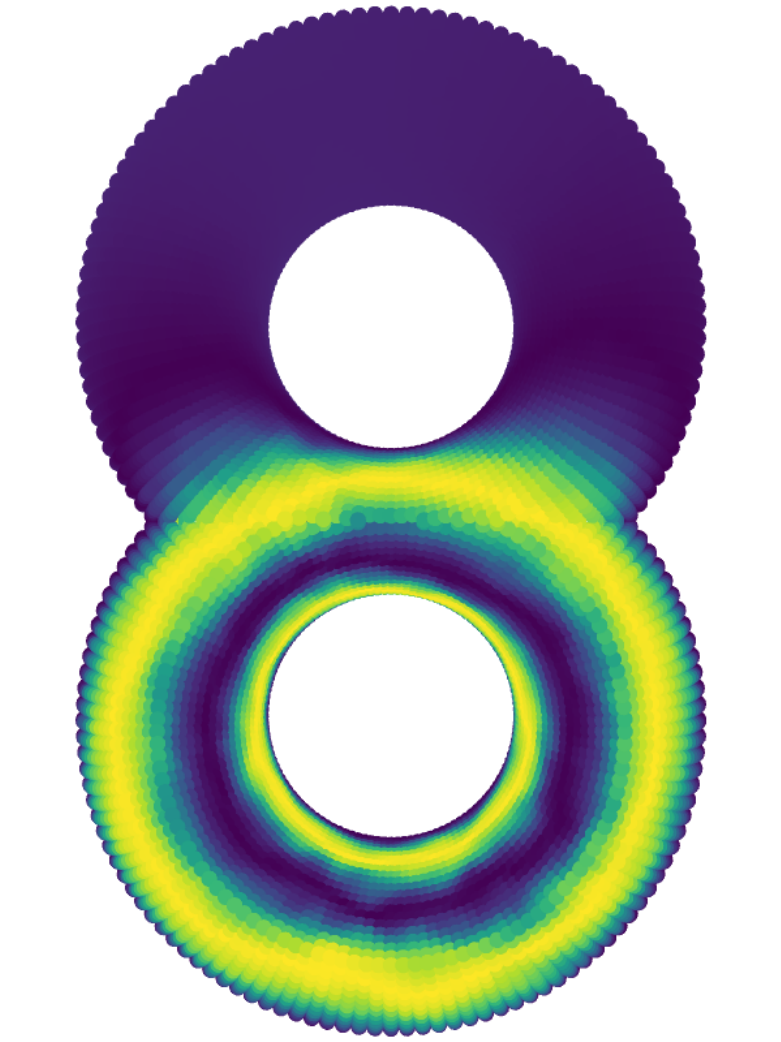}
  \includegraphics[width=0.09\linewidth]{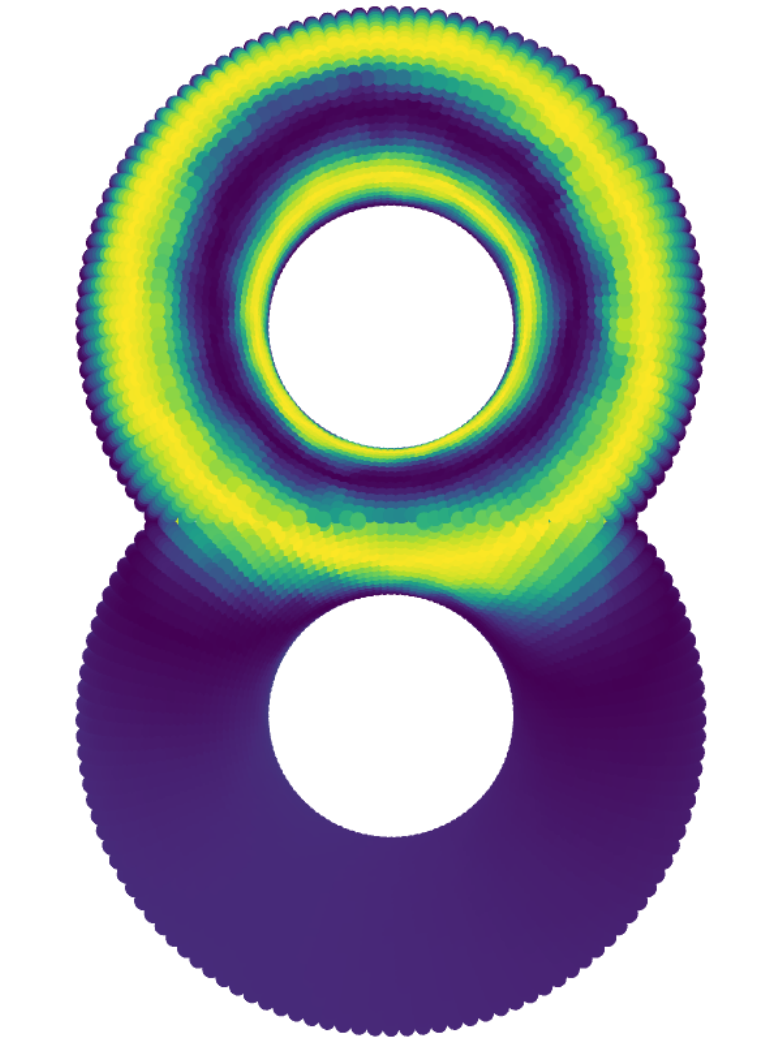}
  };
\draw[line width=1pt,-stealth] ([xshift=2mm,yshift=-3mm]a.east) -- ([xshift=-2mm,yshift=-3mm]b.west)node[midway,above,text=black]
{\scriptsize
    $\begin{pmatrix}
    1 &  0 &  0 &  0\\
   -1 &  1 &  0 &  0\\
    0 & 1 & -1 &  1 \\
    0 & 1 &  0 & 1
\end{pmatrix}$
};
\end{tikzpicture}

\caption{
We represent circular coordinates as explained in \cref{figure:example-coloring}.
\textit{Left:} Four circle-valued maps obtained by running the (Sparse) Circular Coordinates Algorithm on four generators of the first cohomology of a genus two surface.
The generators were obtained using persistent cohomology.
Although the cohomology classes are linearly independent, they do not give a particularly efficient representation of the $1$-dimensional holes in the data: for instance, the first two maps both vary as one goes around the bottom outer hole.
\textit{Right:} Four circle-valued maps obtained by running the (Sparse) Toroidal Coordinates Algorithm, with input the same four cohomology classes used on the left.
\textit{Middle:} The change of basis matrix applied to the cohomology classes in order to geometrically decorrelate them.
See \cref{subsec: genus2surface} for details about this example.}
\label{fig:gen2}

\vspace{0.8cm}

\includegraphics[width=.7\linewidth]{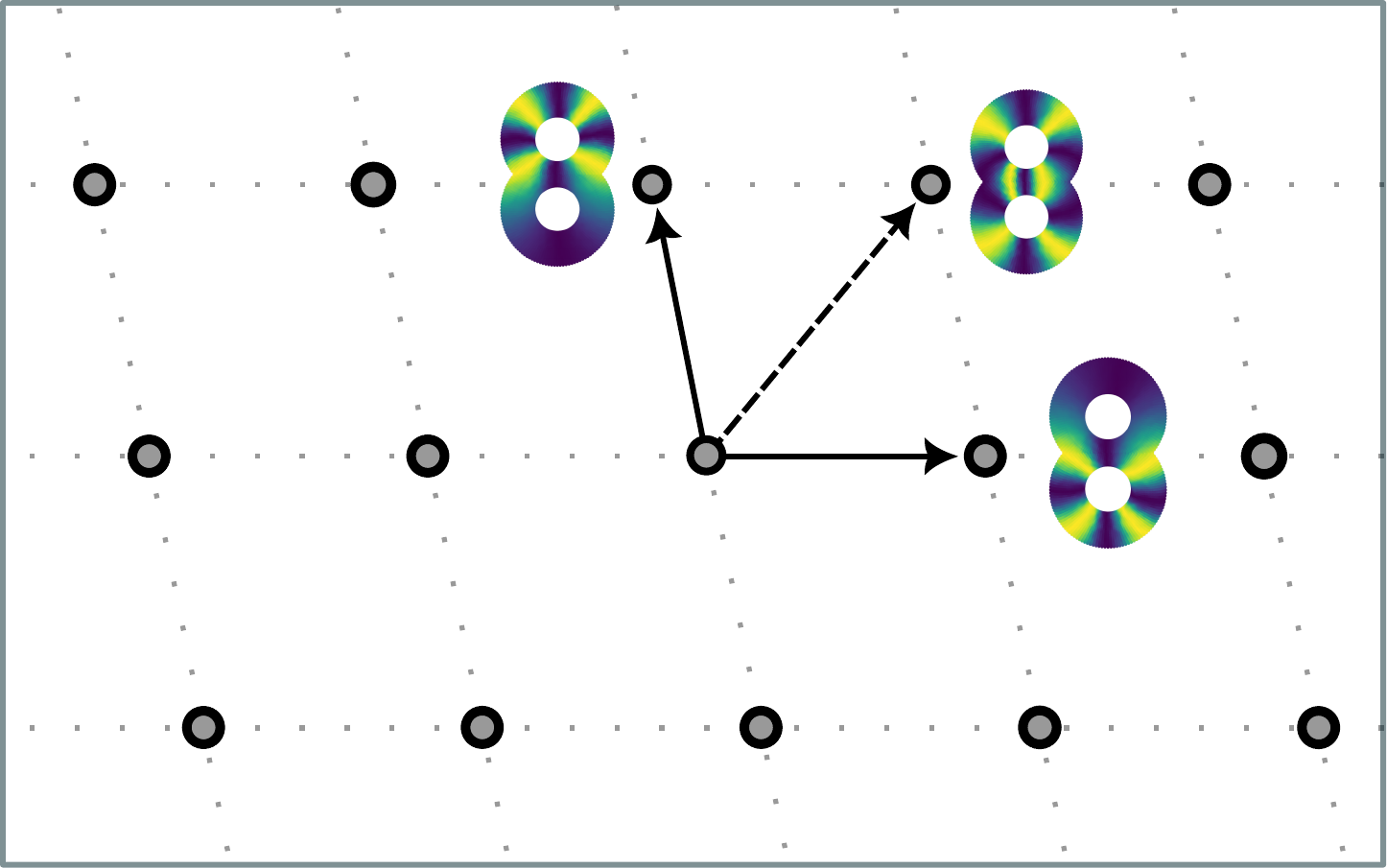}
\caption{The lattice generated by the two left-most circle-valued maps in \cref{fig:gen2}, using our notion of discrete geometric correlation (\cref{definition:geometric-correlation}).
These two circle-valued maps are represented as the horizontal vector and the dashed vector.
Note that, although these two vectors form a basis of the lattice, there exists a basis of smaller total squared length: the one formed by the two solid vectors.
The two solid vectors correspond to two circle-valued returned by the (Sparse) Toroidal Coordinates Algorithm, as shown in \cref{fig:gen2}.}
\label{figure:lattice}
\end{figure}

\subparagraph{Contributions}
Given a Riemannian manifold $\Mcal$, we propose to measure the geometric correlation between smooth maps $f,g: \Mcal \to \circle$ using the Dirichlet form $D(f,g) \in \Rbb$.
We show that, given smooth maps $f,g: \Mcal \to \circle$ obtained by integrating cocycles $\theta$ and $\eta$ defined on the nerve $N(\mathcal{U})$ of an open cover $\mathcal{U}$ of $\Mcal$, there exists an inner product $\innerprod_D$ at the level of cocycles inducing an isometry $\langle \theta , \eta \rangle_D = D(f,g)$ (\cref{proposition:isometry}).
This motivates our Toroidal Coordinates Algorithm (\cref{algorithm:toroidal-coordinates}), which works at the level of cocycles on a simplicial complex and produces low energy torus-valued representations of data.
We prove that the energy minimization subroutine of the Toroidal Coordinates Algorithm is correct (\cref{proposition:algo-eq-1-correct}) and give a geometric interpretation (\cref{proposition:analogy-toroidal-coordinates}).
We introduce the Sparse Toroidal Coordinates Algorithm (\cref{algorithm:sparse-toroidal-coordinates})---a more scalable version of our main algorithm---implemented in \cite{implementation-2}, and showcase it on four datasets (\cref{section:examples}). 

\subparagraph{Structure of the paper}
\cref{section:background} contains background and can be referred to as needed.
The next two sections, \ref{section:toroidal-coordinates-section} and \ref{section:geometric-interpretation}, can be read in any order:
\cref{section:toroidal-coordinates-section} contains a computational description of the Toroidal Coordinates Algorithm, while \cref{section:geometric-interpretation} describes an analogous procedure for Riemannian manifolds and serves as motivation.
\cref{section:sparse-toroidal-coordinates} describes the Sparse Toroidal Coordinates Algorithm, then demonstrated in the examples of \cref{section:examples}.

\subparagraph{Discussion}
In the examples in \cref{section:examples}, running the Sparse Toroidal Coordinates Algorithm on a set of cohomology classes gives results that are qualitatively and quantitatively better than the results obtained by running the Sparse Circular Coordinates Algorithm separately on each class.
This suggests that the Dirichlet form is indeed a useful notion of geometric correlation that can be leveraged for producing geometrically efficient and topologically faithful low-dimensional representations of data.
We believe our methods can be extended to representations valued in non-trivial spaces other than tori, such as other Lie groups.

Various interesting problems remain open:
Is our lattice reduction problem (\cref{problem:our-lattice-reduction}) provably a hard computational problem?
Why is it that the de Silva--Morozov--Vejdemo-Johansson inner product and the inner product estimated in \cref{construction:heuristic} give such similar results (\cref{remark:inner-product-used})?
Are our heuristics for estimating the Dirichlet form from finite samples (Construction~\ref{construction:heuristic} and
\ref{construction:heuristic-dirichlet-form})
consistent?
Here, consistency refers to convergence in probability to the Dirichlet form as the number of samples goes to infinity.

\section{Background}
\label{section:background}

For details about the basics of algebraic topology and Riemannian geometry, we refer the reader to \cite{munkres} and \cite{jost}, respectively.

\subparagraph{Cohomology}
Let $K$ be a finite abstract simplicial complex and let $A$ be either of the rings $\Zbb$ or $\Rbb$.
Let $K_0$ denote the set of vertices of $K$ and let $K_1 = \{(i,j) \in K_0 \times K_0 : \{i,j\} \text{ is a $1$-simplex of $K$}\}$. For a function $\theta : K_1 \to A$, we denote the evaluation of $\theta$ on a pair $(i,j)$ by $\theta^{ij}$.
The group of $0$-cochains $\simpchzero(K;A)$ is the Abelian group of functions $K_0 \to A$, and the group of $1$-cocycles is the Abelian group
\[
    \simpco(K;A) = \left\{ \theta : K_1 \to A\,
    \left|
    \begin{array}{cc}
        \theta^{ij} = -\theta^{ji} \text{ for all $(i,j) \in K_1$},\\
        \theta^{ij} + \theta^{jk} = \theta^{ik}  \text{ for every $2$-simplex $\{i,j,k\}$ of $K$ }
    \end{array}
    \right.
    \right\},
\]
The \define{first cohomology group} of $K$ with coefficients in $A$ is $\simphom(K;A) \coloneqq \simpco(K;A)/\Im(\delta)$, 
where $\delta$ denotes the group morphism $\simpchzero(K;A) \to \simpco(K;A)$ defined by $\delta(\tau)^{ij} = \tau(j) - \tau(i)$.
Given $\theta \in \simpco(K;A)$ we denote its image in $\simphom(K;A)$ as $[\theta] \in \simphom(K;A)$.

For any topological space $B$, we let $\iota$ denote the homomorphism $\iota:\simphom(B;\Zbb) \to \simphom(B;\Rbb)$ induced by the inclusion of coefficients $\Zbb \hookrightarrow \Rbb$.

\subparagraph{The Frobenius inner product}
Let $W$ and $Z$ be real, finite dimensional inner product spaces.
For a linear map $A:Z \to W$, let $A^* : Z \to W$ denotes the adjoint of $A$ with respect to the inner products on $W$ and $Z$.
The \define{Frobenius inner product} between two linear maps $A,B : W \to Z$ is defined as $\langle A, B\rangle_F \coloneqq \trace(A^* B)$.
In particular, the space of linear maps $W \to Z$ can be endowed with the \define{Frobenius norm}, given by $\|A\|_F \coloneqq \sqrt{\trace(A^* A)}$.

\subparagraph{Circle and tori}
We define the circle as the quotient of topological Abelian groups $\circle = \Rbb/\Zbb$, with the induced quotient map $\Rbb \xrightarrow{q} \circle$ given by mapping $r$ to $r\mod \Zbb$.
We endow $\circle$ with the unique Riemannian metric that makes $q$ a local Riemannian isometry.
Given $k \in \Nbb$, let $\torus^k = \left(\circle\right)^k$ denote the \define{$k$-dimensional torus} with the product Riemannian metric.

\subparagraph{Circle-valued maps}
Let $B$ be a topological space.
Given $f,g : B \to \circle$, define $f+g : B \to \circle$ by $(f+g)(p) = f(p) + g(p)$ for all $p \in B$. 
This endows the set of maps $B \to \circle$ with the structure of an Abelian group.
We say $f : B \to \circle$ and $g : B \to \circle$ are \define{rotationally equivalent} if $f - g$ is constant on each connected component of $B$.
Analogously, for a simplicial complex $K$,
we say that maps on vertices $f : K_0 \to \circle$ and $g : K_0 \to \circle$ are \define{rotationally equivalent} if $f - g : K_0 \to \circle$ is constant on each connected component of $K$.

\subparagraph{Differential of circle-valued maps}
There is a canonical isomorphism $\tangent \circle \cong \circle \times \Rbb$ of Riemannian vector bundles over $\circle$.
Here, $\circle \times \Rbb \to \circle$ is the trivial Riemannian vector bundle over $\circle$ and the isomorphism is given by the linear isometries $d( \cdot - q)_q :  \tangent_q \circle \to \tangent_0 \circle \cong \Rbb$,
where $\cdot - q : \circle \to \circle$ denotes subtracting $q$, and the isomorphism $\tangent_0 \circle \cong \Rbb$ is chosen once and for all.
Using the isomorphism $\tangent \circle \cong \circle \times \Rbb$, we can unambiguously treat the differential of a map $f : \Mcal \to \circle$ at a point $p \in \Mcal$ as a linear function $d f_p : \tangent_p \Mcal \to \Rbb$.
In particular, any smooth map $f : \Mcal \to \circle$ induces a $1$-form $df \in \Omega^1(\Mcal)$ on $\Mcal$.

\subparagraph{Dirichlet energy and Dirichlet form}
Given a closed Riemannian manifold $\Mcal$, we let $\mu$ denote its Riemannian measure.
The \define{Dirichlet energy} of a smooth map $f : \Mcal \to \Ncal$ between Riemannian manifolds is
\[
    E[f] \coloneqq \frac{1}{2}\int_{p \in \Mcal} \|df_p\|^2_F \;\dsf \mu(p),
\]
where $df_p : \tangent_p\Mcal \to \tangent_{f(p)} \Ncal$ is the differential of $f$, a map between inner product spaces.

Recall that the inner product on the space of $1$-forms $\Omega^1(\Mcal)$ is given, for $\theta, \eta \in \Omega^1(\Mcal)$, by $\langle \theta, \eta \rangle_{\Omega^1} \coloneqq \int_{p \in \Mcal} \langle \theta_p, \eta_p \rangle_F\; \dsf \mu(p)$.
One can thus extend the Dirichlet energy of circle-valued maps to a bilinear form, as follows.
Given $f,g : \Mcal \to \circle$, define their \define{Dirichlet form} as
\[
    D(f,g) \coloneqq \frac{1}{2} \langle df, dg \rangle_{\Omega^1} = \frac{1}{2} \int_{p \in \Mcal} \langle df_p, dg_p \rangle_F \;\dsf \mu(p).
\]
We remark that, as defined, the Dirichlet form makes sense only for circle-valued maps.
We conclude by noticing that the Dirichlet form and the Dirichlet energy determine each other.
On one hand, we have $E[f] = D(f,f)$.
On the other hand, we have $D(f,g) = \frac{1}{4}(E(f + g) - E(f - g))$, by the polarization identity for the inner product space $\Omega^1(\Mcal)$.

\section{The Toroidal Coordinates Algorithm}
\label{section:toroidal-coordinates-section}

\subsection{From circular coordinates to toroidal coordinates}
\label{section:main-algorithms}
We recall the circular coordinates algorithm of \cite{silva-vejdemo,silva-morozov-vejdemo} and use its main minimization subroutine to motivate the Toroidal Coordinates Algorithm.
The most relevant portion of the full pipeline%
\footnote{We refer the reader to \cite[Sections~2.2--2.4]{silva-morozov-vejdemo} for details about the rest of the pipeline.} is given as \cref{algorithm:circular-coordinates}, which we refer to as the \define{Circular Coordinates Algorithm}.

As can be easily checked, the minimization subroutine (\cref{algorithm:harmonic-representative}) of the Circular Coordinates Algorithm returns a solution to the following problem:

\begin{problem}
    \label{problem:minimization-circular-coordinates}
Given $0\neq \alpha \in \simphom(K;\Zbb)$ and an inner product $\innerprod$ on $\simpco(K;\Rbb)$, find $\theta \in \simpco(K;\Rbb)$ of minimum norm such that $[\theta] = \iota(\alpha) \in \simphom(K;\Rbb)$.
\end{problem}

We propose the following extension of \cref{problem:minimization-circular-coordinates}
to the case in which more than one cohomology class is selected.

\begin{problem}
    \label{equation:minimization-toroidal-coordinates}
Given linearly independent $\alpha_1, \dots, \alpha_k \in \simphom(K;\Zbb)$ and inner product $\innerprod$ on $\simpco(K;\Rbb)$, find $\theta_1 , \dots, \theta_k \in \simpco(K;\Rbb)$ minimizing $\sum_{j=1}^k \|\theta_j\|^2$, with the property that
the sets $\{[\theta_j]\}_{1 \leq j \leq k}$ and $\{\iota(\alpha_j)\}_{1 \leq j \leq k}$ generate the same Abelian subgroup of $\simphom(K;\Rbb)$.
\end{problem}

Simple examples, such as the one depicted in \cref{figure:lattice}, show that
\cref{equation:minimization-toroidal-coordinates} does not reduce to solving \cref{problem:minimization-circular-coordinates} for each individual cohomology class.
Indeed, as explained in \cref{section:lattice-reduction}, we believe that \cref{equation:minimization-toroidal-coordinates} is significantly harder to solve exactly than \cref{problem:minimization-circular-coordinates}.
Nevertheless, we also show that one can use the Lenstra--Lenstra--Lov\'{a}sz lattice basis reduction algorithm to find an approximate solution to \cref{equation:minimization-toroidal-coordinates}.
This approximation is the content of the following result, which is proven in 
\cref{section:proofs-lattice-reduction}. 

\begin{theorem}
    \label{proposition:algo-eq-1-correct}
    The output of \cref{algorithm:solving-eq-1} consists of cocycles $\theta_1, \dots, \theta_k$ such that $\sum_{j=1}^k \|\theta_j\|^2$ is at most $2^{k-1}$ times the optimal solution of \cref{equation:minimization-toroidal-coordinates}.
\end{theorem}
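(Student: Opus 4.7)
The plan is to translate \cref{equation:minimization-toroidal-coordinates} into a lattice basis reduction problem and then invoke the standard LLL guarantees. First, for any fixed cohomology class $\beta \in \simphom(K;\Rbb)$, the cocycle representative of $\beta$ of minimum norm is the \emph{harmonic} representative, i.e.\ the orthogonal projection of any representative onto the orthogonal complement of $\Im(\delta)$ inside $\simpco(K;\Rbb)$ with respect to $\innerprod$. This assigns to each $\beta$ in the subgroup $\Lambda \subset \simphom(K;\Rbb)$ generated by $\iota(\alpha_1),\dots,\iota(\alpha_k)$ a well-defined norm $\|\beta\|_{\mathrm{harm}}$, and this assignment is the restriction of an inner product on the finite dimensional space $\Lambda \otimes \Rbb$. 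Since the $\iota(\alpha_j)$ are linearly independent, $\Lambda$ is a rank-$k$ lattice in this inner product space, and \cref{equation:minimization-toroidal-coordinates} reduces to finding a $\Zbb$-basis $\beta_1,\dots,\beta_k$ of $\Lambda$ that minimizes $\sum_j \|\beta_j\|_{\mathrm{harm}}^2$. The $\theta_j$ produced by the algorithm are then exactly the harmonic representatives of the $\beta_j$.

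Next I would recall two standard facts. Let $\lambda_1 \leq \dots \leq \lambda_k$ denote the successive minima of $\Lambda$ with respect to $\|\cdot\|_{\mathrm{harm}}$.
\textbf{(i)} For any basis $c_1,\dots,c_k$ of $\Lambda$, sorting by increasing norm gives $\|c_{\sigma(i)}\|_{\mathrm{harm}} \geq \lambda_i$, because $c_{\sigma(1)},\dots,c_{\sigma(i)}$ are linearly independent. Summing yields the universal lower bound
\[
\sum_{j=1}^{k} \|c_j\|_{\mathrm{harm}}^2 \;\geq\; \sum_{i=1}^{k} \lambda_i^2
\]
on the value of any candidate solution.
\textbf{(ii)} An LLL-reduced basis $b_1,\dots,b_k$ of $\Lambda$ (with the standard parameter) satisfies the well-known inequality $\|b_i\|_{\mathrm{harm}}^2 \leq 2^{k-i}\,\lambda_i^2$ for every $i \in \{1,\dots,k\}$.

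Combining these, for the LLL-reduced basis output by the algorithm we obtain
\[
\sum_{i=1}^{k} \|b_i\|_{\mathrm{harm}}^2 \;\leq\; \sum_{i=1}^{k} 2^{k-i}\,\lambda_i^2 \;\leq\; 2^{k-1} \sum_{i=1}^{k} \lambda_i^2 \;\leq\; 2^{k-1} \cdot \mathrm{OPT},
\]
where $\mathrm{OPT}$ denotes the minimum of \cref{equation:minimization-toroidal-coordinates}. This gives the desired $2^{k-1}$ approximation factor.

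The main obstacles are conceptual rather than computational. The first is formalizing the reduction to a lattice problem: one must verify that \cref{algorithm:solving-eq-1} indeed operates on harmonic representatives (or equivalently on cohomology classes with the inner product induced from harmonic norms), so that Gram matrices computed in cocycle space coincide with Gram matrices of lattice vectors. The second is ensuring that the LLL input is set up so that fact \textbf{(ii)} applies verbatim; concretely, one passes to a basis of $\Lambda$ in $\harforms(K)\cong \Rbb^k$ endowed with the restricted inner product, runs LLL there, and pulls the output back to cocycles. Once these are in place, the chain of inequalities above is routine.
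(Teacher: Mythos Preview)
Your approach is correct and matches the paper's: reduce \cref{equation:minimization-toroidal-coordinates} to a lattice basis problem via harmonic representatives (exactly the paper's \cref{proposition:reduction-eq1-to-lattice-reduction}), then invoke the LLL approximation guarantee. Two remarks are in order.

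First, your fact \textbf{(ii)} is misstated. The standard LLL bound (parameter $\delta=3/4$) is $\|b_i\|^2 \le 2^{k-1}\lambda_i^2$ for every $i$, not $2^{k-i}\lambda_i^2$. The bound $\|b_i^*\|^2 \le 2^{k-i}\lambda_i^2$ does hold for the Gram--Schmidt vectors $b_i^*$, and combining it with $\|b_i\|^2 \le 2^{i-1}\|b_i^*\|^2$ is precisely how one obtains the uniform $2^{k-1}$ factor; perhaps this is the source of the slip. In particular your claim for $i=k$ would force $\|b_k\|\le\lambda_k$, which LLL does not guarantee. Fortunately the correct bound already yields
\[
\sum_{i=1}^k \|b_i\|^2 \;\le\; 2^{k-1}\sum_{i=1}^k \lambda_i^2 \;\le\; 2^{k-1}\,\mathrm{OPT},
\]
so your chain of inequalities survives unchanged once \textbf{(ii)} is corrected.

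Second, the paper's argument bypasses successive minima: it invokes the reordering established in the proof of \cite[Proposition~1.12]{lenstra-lenstra-lovasz}, which for any linearly independent $x_1,\dots,x_k\in L$ gives (after reindexing) $\|b_j\|^2\le 2^{k-1}\|x_j'\|^2$, and applies this directly with the $x_j$ taken to be an optimal basis. Summing over $j$ gives $\|B\|_F^2\le 2^{k-1}\|V\|_F^2$ in one stroke. Your route through successive minima is equally valid---fact \textbf{(i)} is immediate---but the paper's comparison is marginally more direct since it never needs the lower bound $\mathrm{OPT}\ge\sum_i\lambda_i^2$.
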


\cref{algorithm:solving-eq-1} constitutes the main minimization subroutine of the Toroidal Coordinates Algorithm, which is given as \cref{algorithm:toroidal-coordinates}.

\subsection{On the choice of inner product}
\cref{algorithm:circular-coordinates,algorithm:toroidal-coordinates} depend on a user-given choice of inner product on $\simpco(K;\Rbb)$.
In \cite{silva-vejdemo,silva-morozov-vejdemo}, the inner product used is given by
\begin{equation}
    \label{equation:SMV-inner-product}
     \langle \theta, \eta \rangle_{\SMV} \coloneqq \sum_{\{i,j\} \in K_1} \theta^{ij} \eta^{ij}.
\end{equation}
The motivation for this choice is given in \cite[Proposition~2]{silva-morozov-vejdemo}, which implies that the map $K_0 \to \circle$ returned by \cref{algorithm:circular-coordinates} has the property that it can be extended to a continuous function $|K| \to \circle$ which maps each edge $\{i,j\}$ of $K$ to a curve of length $|\theta^{ij}|$.
Thus, with this choice of inner product, the circle-valued representation returned by \cref{algorithm:circular-coordinates} is one that stretches the edges of the simplicial complex as little as possible.

There are other natural choices of inner product.
In particular, we show in \cref{proposition:isometry} that there exists an inner product between cocycles that recovers the Dirichlet form between circle-valued maps obtained by integrating these cocycles.
Since, as explained in the contributions section, we propose to measure the geometric correlation between maps $f,g : \Mcal \to \circle$ on a Riemannian manifold using their Dirichlet form, this motivates the following definition.

\begin{definition}
\label{definition:geometric-correlation}
Let $K$ be a simplicial complex and let $\innerprod$ be an inner product on $\simpco(K;\Rbb)$.
Given cocycles $\theta, \eta \in \simpco(K;\Rbb)$ with $[\theta], [\eta] \in \Im(\iota : \simphom(K;\Zbb) \to \simphom(K;\Rbb))$, define the \define{discrete geometric correlation} between $\integrate_\theta, \integrate_\eta : K_0 \to \circle$ as $\langle \theta, \eta \rangle$.
Here $\integrate$ is as defined in \cref{algorithm:cocycle-integration}.
\end{definition}

In \cref{section:geometric-interpretation}, we give a geometric interpretation of the Toroidal Coordinates Algorithm and provide more details as to why the above notion of discrete geometric correlation is a discrete analogue of the Dirichlet form (\cref{remark:inner-product-is-dirichlet-energy}).

We conclude this section with a remark explaining why an exact or approximate solution to \cref{equation:minimization-toroidal-coordinates} promotes low discrete geometric correlation.

\begin{remark}
    \label{remark:minimimum-has-low-correlation}
Let $\theta_1, \dots, \theta_k \in \simpco(K;\Rbb)$.
For any linear map $A : W \to Z$ between finite dimensional inner product spaces, we have $\|A^*A\|_F \leq \|A\|_F^2$.
Thus, if $A : \Rbb^k \to \simpco(K;\Rbb)$ is given by mapping the $j$th standard basis vector to $\theta_j$, we get
\[
    \sum_{1 \leq i,j \leq k} \langle \theta_i, \theta_j \rangle^2 = \|A^*A\|_F \leq \|A\|_F^2 = \sum_{j=1}^k \|\theta_j\|^2.
\]
This implies that a set of cocycles solving \cref{equation:minimization-toroidal-coordinates}
exactly or approximately (right-hand side) induces, by integration (\cref{algorithm:cocycle-integration}), a set of cicle-valued maps with low pairwise squared discrete geometric correlation (left-hand side).
\end{remark}

\begin{figure}
\begin{algorithm}[H]
\setstretch{1.35}
    \begin{algorithmic}[1]
    \Require{a non-trivial cohomology class $\alpha \in \simphom(K;\Zbb)$ and an inner product $\innerprod$ on $\simpco(K;\Rbb)$}
    \Ensure{a function $\circcoords_{\alpha} : K_0 \to \circle$}
    \State Let $\theta \coloneqq \harmrep(\alpha, \innerprod)$
    \State Let $\circcoords_{\alpha} \coloneqq \integrate_\theta$
    \end{algorithmic}
        \caption{The Circular Coordinates Algorithm}
        \label{algorithm:circular-coordinates}
\end{algorithm}

\begin{algorithm}[H]
\setstretch{1.35}
    \begin{algorithmic}[1]
    \Require{l.i.~cohomology classes $\alpha_1, \dots, \alpha_k \in \simphom(K;\Zbb)$ and inner product $\innerprod$ on $\simpco(K;\Rbb)$}
    \Ensure{a function $\torcoords_{\alpha} : K_0 \to \torus^k$}
    \State Let $\theta_1, \dots, \theta_k \coloneqq \lowenergyreps(\alpha_1, \dots, \alpha_k, \innerprod)$
    \State Let $\torcoords_{\alpha}  \coloneqq (\integrate_{\theta_1}, \dots, \integrate_{\theta_k})$
    \end{algorithmic}
        \caption{The Toroidal Coordinates Algorithm}
        \label{algorithm:toroidal-coordinates}
\end{algorithm}

\begin{algorithm}[H]
\setstretch{1.35}
    \begin{algorithmic}[1]
    \Require{a non-trivial cohomology class $\alpha \in \simphom(K;\Zbb)$ and an inner product $\innerprod$ on $\simpco(K;\Rbb)$}
    \Ensure{a cocycle $\harmrep(\alpha,\innerprod) \in \simpco(K;\Rbb)$}
    \State Let $\eta \in \simpco(K;\Zbb)$ be such that $[\eta] = \alpha \in \simphom(K;\Zbb)$
    \State Use least squares, w.r.t.~$\innerprod$, to solve $\tau = \argmin \{\;\|\iota(\eta) - \delta(\tau)\| \;\mid\; \tau : K_0 \to \Rbb\;\}$
    \State Let $\harmrep(\alpha,\innerprod) \coloneqq \iota(\eta) - \delta(\tau)$
    \end{algorithmic}
        \caption{Harmonic representative}
        \label{algorithm:harmonic-representative}
\end{algorithm}

\begin{algorithm}[H]
\setstretch{1.35}
    \begin{algorithmic}[1]
    \Require{l.i.~cohomology classes $\alpha_1, \dots, \alpha_k \in \simphom(K;\Zbb)$ and inner product $\innerprod$ on $\simpco(K;\Rbb)$}
    \Ensure{list of $k$ cocycles $\lowenergyreps(\alpha_1, \dots, \alpha_k, \innerprod) \subseteq \simpco(K;\Rbb)$}
    \State Let $\eta_j \coloneqq \harmrep(\alpha_j,\innerprod)$ for $1 \leq j \leq k$
    \State Compute the Cholesky decomposition $G = C C^*$ of $G \in \Rbb^{k \times k}$ with $G_{ij} = \langle \eta_i, \eta_j \rangle$
    \State Let $b_1, \dots, b_k \coloneqq \lll(C_1, \dots, C_k)$, with $C_j$ the $j$th row of $C$ and $\lll$ as in \cref{section:lattice-reduction}
    \State Let $M \in \Zbb^{k\times k}$ be the change of basis matrix such that $M C = (b_1, \dots, b_k)^T$
    \State Let $\lowenergyreps(\alpha_1, \dots, \alpha_k, \innerprod) \coloneqq M \; (\eta_1, \dots, \eta_k)^T$
    \end{algorithmic}
        \caption{Low energy representatives}
        \label{algorithm:solving-eq-1}
\end{algorithm}

\begin{algorithm}[H]
\setstretch{1.35}
    \begin{algorithmic}[1]
    \Require{a cocycle $\theta \in \simpco(K;\Rbb)$ such that $[\theta] \in \Im(\iota : \simphom(K;\Zbb) \to \simphom(K;\Rbb))$}
    \Ensure{a function $\integrate_\theta : K_0 \to \circle$}
    \State Assume $K$ is connected, otherwise do the following in each connected component
    \State Choose $x \in K_0$ arbitrarily
    \State \textbf{for} $y \in K_0$ \textbf{do}
    \State \hspace{0.5cm} Choose a path $x = y_0, y_1, \dots, y_{\ell - 1}, y_\ell = y$ from $x$ to $y$, arbitrarily
    \State \hspace{0.5cm} Let $\integrate_\theta(y) \coloneqq \left(\theta^{y_0 y_{1}} + \theta^{y_1 y_{2}} + \dots + \theta^{y_{\ell-2} y_{\ell-1}} + \theta^{y_{\ell-1} y_{\ell}}\right) \mod \Zbb$
    \end{algorithmic}
        \caption{Cocycle integration}
        \label{algorithm:cocycle-integration}
\end{algorithm}

\end{figure}

\subsection{Minimizing the objective function with lattice reduction}
\label{section:lattice-reduction}
We start by describing the specific lattice reduction problem we are interested in.
Fix $k \in \Nbb$ and a $k$-dimensional real vector space $R$ with an inner product.
A full-dimensional \define{lattice} $L$ in $R$ is a discrete subgroup $L \subseteq R$ which generates $R$ as a real vector space.
An ordered \define{basis} of a lattice $L \subseteq R$ consists of an ordered list $B = \{b_1, \dots, b_k\} \subseteq L$ of linearly independent vectors that generate $L$ as an Abelian group.
We are interested in the following problem.

\begin{problem}
\label{problem:our-lattice-reduction}
Let $L \subseteq R$ be a lattice.
Find a basis $B$ of $L$ minimizing $\|B\|_F^2 = \sum_{i=1}^k \|b_i\|^2$.
\end{problem}

We suspect that \cref{problem:our-lattice-reduction} is in general hard to solve exactly or approximately up to a small multiplicative constant.
Formally establishing that this problem is hard is beyond the scope of this work since hardness results for these kinds of problems---like \cite{ajtai} for the shortest vector problem---are usually quite involved; we refer the reader to \cite{khot,regev} for surveys.
We note that minimizations like the one in \cref{problem:our-lattice-reduction} have already been considered in the computational number theory literature, see, e.g., \cite[Equation~38]{dayal-varanasi}.


We content ourselves with the following result, which shows the Lenstra--Lenstra--Lov\'{a}sz lattice basis reduction algorithm (LLL-algorithm), a polynomial-time algorithm introduced in \cite{lenstra-lenstra-lovasz}, provides an approximate solution to \cref{problem:our-lattice-reduction}.
For our purposes, the LLL-algorithm takes as input linearly independent vectors $\{b_1, \dots, b_k\}$ in $\Rbb^k$ and returns a \define{reduced} basis, which we denote by $\lll(b_1, \dots, b_k)$.
We shall not recall the definition of reduced basis here, since all we need to know about them is the following.

\begin{lemma}
    \label{proposition:LLL-gives-approximate-solution}
    Let $L \subseteq \Rbb^n$ and let $V$ be a solution to \cref{problem:our-lattice-reduction} for $L \subseteq \Rbb^n$.
    If $B$ is an reduced basis, then $\|B\|_F^2 \leq 2^{k-1}\,\|V\|_F^2$.
\end{lemma}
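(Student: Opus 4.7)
The plan is to combine two standard facts: a classical size bound on LLL-reduced bases in terms of successive minima, and the observation that any basis of a lattice dominates its successive minima after sorting by length. Let $\lambda_i(L)$ denote the $i$th successive minimum of $L$, i.e., the smallest real $r \geq 0$ such that $L$ contains $i$ linearly independent vectors of norm at most $r$, so that $\lambda_1(L) \leq \dots \leq \lambda_k(L)$.

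The first step is to invoke the textbook bound for LLL-reduced bases (with the standard parameter $3/4$): if $B = \{b_1, \dots, b_k\}$ is a reduced basis of $L$, then $\|b_i\|^2 \leq 2^{k-1}\, \lambda_i(L)^2$ for every $1 \leq i \leq k$. Summing these $k$ inequalities immediately yields
\[
    \|B\|_F^2 \;=\; \sum_{i=1}^k \|b_i\|^2 \;\leq\; 2^{k-1} \sum_{i=1}^k \lambda_i(L)^2.
\]

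Next, I would show that $\sum_{i=1}^k \lambda_i(L)^2 \leq \|V\|_F^2$ for the optimal basis $V = \{v_1, \dots, v_k\}$. After permuting so that $\|v_1\| \leq \|v_2\| \leq \dots \leq \|v_k\|$ (which does not change $\|V\|_F^2$), the vectors $v_1, \dots, v_i$ are $i$ linearly independent elements of $L$ of norm at most $\|v_i\|$, so the definition of the $i$th successive minimum gives $\lambda_i(L) \leq \|v_i\|$ for every $i$. Squaring and summing yields $\sum_{i=1}^k \lambda_i(L)^2 \leq \sum_{i=1}^k \|v_i\|^2 = \|V\|_F^2$, and chaining with the previous inequality completes the proof.

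The main obstacle is really just locating and stating the right form of the LLL bound: some references prove only the weaker $\|b_1\|^2 \leq 2^{k-1} \lambda_1(L)^2$, while the per-index bound $\|b_i\|^2 \leq 2^{k-1} \lambda_i(L)^2$ is what produces the constant $2^{k-1}$ in the conclusion. Beyond that citation, the argument is a short comparison between successive minima and arbitrary bases, so I would keep the write-up brief and refer to a standard source (e.g., the original Lenstra--Lenstra--Lov\'{a}sz paper or a computational number theory textbook) for the reduced-basis bound itself.
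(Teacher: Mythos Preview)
Your proposal is correct and essentially matches the paper's proof: both invoke the per-index bound established in the proof of \cite[Proposition~1.12]{lenstra-lenstra-lovasz} and then sum over $i$. The only difference is cosmetic---you factor through the successive minima $\lambda_i(L)$, whereas the paper applies that bound directly with the optimal basis $V$ as the linearly independent set, bypassing the intermediate step.
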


We prove \cref{proposition:LLL-gives-approximate-solution} in
\cref{section:proofs-lattice-reduction},
where we use it to prove \cref{proposition:algo-eq-1-correct}.
We conclude this section with a practical remark about the LLL-algorithm.

\begin{remark}
    Although the LLL-algorithm can be run with  any input $\{b_1, \dots, b_n\} \subseteq L \subseteq \Rbb^n$, it is guaranteed to terminate only if one uses infinite precision arithmetic.
    In \cite{lenstra-lenstra-lovasz}, this is dealt with by assuming that the given lattice has rational coordinates, i.e., $L \subseteq \Qbb^n \subseteq \Rbb^n$; see \cite[Remark~1.38]{lenstra-lenstra-lovasz}.
    This is a reasonable assumption in our case, since we expect to be given the input cocycles and inner product with some finite precision.

    In our implementation of the LLL-algorithm, we use floating-point arithmetic, for simplicity, 
    and this did not present any problems to us.
    We note that floating-point algorithms with polynomial guarantees do exist in the case $L \subseteq \Zbb^n \subseteq \Rbb^n$, see, e.g., \cite{nguen-stehle}.
\end{remark}

\section{Geometric Interpretation of the Toroidal Coordinates Algorithm}
\label{section:geometric-interpretation}

Let $\Mcal$ be a closed Riemannian manifold.
We propose the following problem as a suitable objective for finding an efficient representation of $\Mcal$ which captures any chosen set of $1$-dimensional holes of $\Mcal$.

\begin{problem}
    \label{problem:geometric-objective-toroidal-coords}
    Given linearly independent cohomology classes $\alpha_1, \dots, \alpha_k \in \simphom(\Mcal;\Zbb)$, find a smooth map $f : \Mcal \to \torus^k$ of minimum Dirichlet energy, with the property that the induced morphism $f^* : \simphom(\torus^k;\Zbb) \to \simphom(\Mcal;\Zbb)$ restricts to an isomorphism between $\simphom(\torus^k;\Zbb) \cong \Zbb^k$ and the subgroup of $\simphom(\Mcal;\Zbb)$ generated by $\alpha_1, \dots, \alpha_k$.
\end{problem}

In this section, we show that the above problem can be solved by an analogue of our Toroidal Coordinates Algorithm, thus providing a geometric interpretation of our algorithm.
In \cref{remark:inner-product-is-dirichlet-energy}, at the end of this section, we explain how this interpretation motivates the notion of discrete geometric correlation of \cref{definition:geometric-correlation}.

First, we give the analogue of cocycle integration (\cref{algorithm:cocycle-integration}) for $1$-forms.

\begin{construction}
    \label{construction:continuous-integration}
    Given a closed $1$-form $\theta \in \Omega^1(\Mcal)$ such that $[\theta] \in \Im(\simphom(\Mcal;\Zbb) \to \simphom(\Mcal;\Rbb))$, consider the following procedure, which returns a function $f : \Mcal \to \circle$.
    \begin{enumerate}
    \item Assume $\Mcal$ is connected, otherwise do the following in each connected component.
    \item Choose $x \in \Mcal$ arbitrarily.
    \item For each $y \in \Mcal$, let $p : [0,1] \to \Mcal$ be any smooth path from $x$ to $y$.
    \item For each $y \in \Mcal$, define $f(y) = \left(\int_0^1 \theta_{p(t)}(p'(t)) dt\right) \mod \mathbb{Z}$.
    \end{enumerate}
\end{construction}


It is worth remarking that, although \cref{construction:continuous-integration} depends on arbitrary choices, all choices yield rotationally equivalent outputs.

The following procedure is the analogue of the Toroidal Coordinates Algorithm.

\begin{construction}
    \label{construction:toroidal-coords-riemannian}
Given linearly independent cohomology classes $\alpha_1, \dots, \alpha_k \in \simphom(\Mcal;\Zbb)$, consider the following procedure, which returns a function $f : \Mcal \to \torus^k$.
\begin{enumerate}
    \item Find closed $\theta_1 , \dots, \theta_k \in \Omega^1(\Mcal)$ minimizing $\sum_{j=1}^k \|\theta_j\|^2$, with the property that the sets $\{[\theta_j]\}_{1 \leq j \leq k}$ and $\{\iota(\alpha_j)\}_{1 \leq j \leq k}$ generate the same Abelian subgroup of $\simphom(\Mcal;\Rbb)$.
    \item Return $(f_1, \dots, f_k) : \Mcal \to \torus^k$, where $f_j$ is obtained by integrating $\theta_j$ (\cref{construction:continuous-integration}).
\end{enumerate}
\end{construction}

\begin{proposition}
    \label{proposition:analogy-toroidal-coordinates}
    \cref{construction:toroidal-coords-riemannian}
    returns a solution to \cref{problem:geometric-objective-toroidal-coords}.
\end{proposition}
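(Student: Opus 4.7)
The plan is to show that Problem~\ref{problem:geometric-objective-toroidal-coords} and the optimization problem inside Construction~\ref{construction:toroidal-coords-riemannian} are equivalent under a natural bijection between (equivalence classes of) smooth $\torus^k$-valued maps and appropriate tuples of closed $1$-forms, and that this bijection matches objective values. First I would decompose every smooth $f : \Mcal \to \torus^k$ componentwise via the projections onto factors as $(f_1, \dots, f_k)$ with $f_j : \Mcal \to \circle$. Since $\torus^k$ carries the product Riemannian metric, the Dirichlet energy splits additively,
\[
    E[f] \;=\; \tfrac{1}{2}\sum_{j=1}^{k} \|df_j\|^2_{\Omega^1},
\]
so setting $\theta_j \coloneqq df_j$ identifies the energy with the sum $\sum_j \|\theta_j\|^2$ minimized in Construction~\ref{construction:toroidal-coords-riemannian}.

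Next I would invoke the classical correspondence between rotational equivalence classes of smooth maps $g : \Mcal \to \circle$ and closed $1$-forms $\theta \in \Omega^1(\Mcal)$ with $[\theta] \in \Im(\iota)$: one direction is $g \mapsto dg$, and the inverse is exactly Construction~\ref{construction:continuous-integration}. Writing $e_1, \dots, e_k \in \simphom(\torus^k;\Zbb)$ for the canonical generators dual to the factor projections, this correspondence together with de Rham's theorem gives $\iota(f^*(e_j)) = [df_j] = [\theta_j]$ in $\simphom(\Mcal;\Rbb)$. Thus the step ``obtain $f_j$ by integrating $\theta_j$'' in Construction~\ref{construction:toroidal-coords-riemannian} is inverse to the decomposition $f \mapsto (df_1, \dots, df_k)$, and the feasible sets are in bijection provided the two constraints match.

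Establishing that the constraints match is where I expect the main care to be required. Since $\Mcal$ is a closed manifold, $\simphom(\Mcal;\Zbb)$ is torsion-free, so $\iota$ is injective. Because $\simphom(\torus^k;\Zbb)$ and $\langle \alpha_1, \dots, \alpha_k\rangle \subseteq \simphom(\Mcal;\Zbb)$ are both free Abelian of rank $k$, the pullback $f^*$ restricts to an isomorphism between them precisely when $\{f^*(e_j)\}_j$ generates $\langle \alpha_1, \dots, \alpha_k\rangle$ as an Abelian group; applying the injective $\iota$ and using $\iota(f^*(e_j)) = [\theta_j]$, this is equivalent to $\{[\theta_j]\}_j$ and $\{\iota(\alpha_j)\}_j$ generating the same Abelian subgroup of $\simphom(\Mcal;\Rbb)$, which is exactly the constraint in Construction~\ref{construction:toroidal-coords-riemannian}. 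The main obstacle here is verifying this equivalence rigorously: one must check that a surjection between free Abelian groups of the same finite rank is automatically an isomorphism, and that ``same subgroup in integer cohomology'' transports faithfully to ``same subgroup in real cohomology'' via $\iota$, both of which rely on torsion-freeness. Combining the three steps, the two minimizations have matching feasible sets and matching objective values, so Construction~\ref{construction:toroidal-coords-riemannian} returns a solution to Problem~\ref{problem:geometric-objective-toroidal-coords}.
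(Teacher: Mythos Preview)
Your proposal is correct and follows the same strategy as the paper: split the Dirichlet energy additively via the product metric on $\torus^k$, invoke the bijection between rotational classes of smooth $\circle$-valued maps and closed $1$-forms with integral class (via \cref{construction:continuous-integration} and its inverse $f\mapsto df$, which is the content of \cref{lemma:differential-of-circlular-coordinates}), and transport the minimization. You are in fact more explicit than the paper on the constraint-matching step---the paper dispatches it with a terse ``the result follows''---and your argument via injectivity of $\iota$ and the fact that a surjection between free Abelian groups of equal finite rank is an isomorphism fills this in correctly; one small remark is that torsion-freeness of $\simphom(\Mcal;\Zbb)$ holds for any space by the universal coefficient theorem, not specifically because $\Mcal$ is a closed manifold.
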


A proof of \cref{proposition:analogy-toroidal-coordinates} is in
\cref{section:proof-of-analogy-toroidal-coords}.
We conclude with a remark relating the Dirichlet form to our notion of discrete geometric correlation.

\begin{remark}
    \label{remark:inner-product-is-dirichlet-energy}
    Recall from the contributions section that we propose to measure geometric correlation between maps $f,g : \Mcal \to \circle$ using the Dirichlet form $D(f,g)$.
    On one hand, if $f$ and $g$ are obtained using \cref{construction:continuous-integration} with input $1$-forms $\theta$ and $\eta$, respectively, then $D(f,g) = \frac{1}{2}\langle \theta, \eta \rangle_{\Omega^1}$, by
    \cref{lemma:differential-of-circlular-coordinates}.
    On the other hand, given a simplicial complex $K$ with inner product $\innerprod$ on $\simpco(K;\Rbb)$, and maps $f', g' : K_0 \to \circle$ obtained using \cref{algorithm:cocycle-integration} with inputs cocycles $\theta'$ and $\eta'$, respectively, we defined the discrete geometric correlation between $f'$ and $g'$ as $\langle \theta', \eta' \rangle$.
    In this sense, our notion of discrete geometric correlation is a discrete analogue of the Dirichlet form. \cref{proposition:isometry} makes this analogy precise: when using Algorithm ~\ref{algorithm:sparse-cocycle-integration}, there exists an inner product that exactly recovers the Dirichlet form.
\end{remark}

\section{The Sparse Toroidal Coordinates Algorithm}
\label{section:sparse-toroidal-coordinates}

Although effective, the Circular Coordinates Algorithm has two practical drawbacks.
First, the simplicial complex $K$ is usually taken to be a Vietoris--Rips complex, and thus the cohomology computations scale with the number of data points.
Second, the circle-valued representation returned by the algorithm is defined only on the input data and no representation is provided for out-of-sample data points.
The sparse circular coordinates algorithm of \cite{perea2020sparse} addresses these shortcomings.
We now describe a version of the sparse circular coordinates algorithm\footnote{We refer the reader to \cite{perea2020sparse} for the full pipeline.}
and recall the steps not included here when describing \cref{construction:pipeline-examples} in the examples.

\begin{algorithm}[H]
\setstretch{1.35}
    \begin{algorithmic}[1]
    \Require{a finite open cover $\Ucal = \{U_x\}_{x \in I}$ of a topological space $B$, a partition of unity $\Phi =\{\phi_x\}_{x \in I}$ subordinate to $\Ucal$,
    a simplicial complex $K \supseteq N(\Ucal)$, and a cocycle $\theta \in \simpco(K;\Rbb)$ such that $[\theta] \in \Im(\iota : \simphom(K;\Zbb) \to \simphom(K;\Rbb))$}
    \Ensure{a function $\sparseintegrate^\Phi_\theta : B \to \circle$}
    \State Assume $K$ is connected, otherwise do the following in each connected component
    \State Choose $x \in K_0$ arbitrarily
    \State \textbf{for} $y \in K_0$ \textbf{do}
    \State \hspace{0.5cm} Choose a path $x = y_0, y_1, \dots, y_{\ell - 1}, y_\ell = y$ from $x$ to $y$, arbitrarily
    \State \hspace{0.5cm} Let $\tau_y \coloneqq \theta^{y_0 y_{1}} + \theta^{y_1 y_{2}} + \dots + \theta^{y_{\ell-2} y_{\ell-1}} + \theta^{y_{\ell-1} y_{\ell}}$
    \State Let $\sparseintegrate^\Phi_\theta(b) \coloneqq \left(\tau_y + \sum_{z \in I} \phi_z(b) \; \theta^{yz}\right) \mod \Zbb$, where $b \in U_y$
    \end{algorithmic}
        \caption{Sparse cocycle integration}
        \label{algorithm:sparse-cocycle-integration}
\end{algorithm}


\begin{algorithm}[H]
\setstretch{1.35}
    \begin{algorithmic}[1]
    \Require{
    a finite open cover $\Ucal = \{U_x\}_{x \in I}$ of a topological space $B$, a partition of unity $\Phi = \{\phi_x\}_{x \in I}$ subordinate to $\Ucal$,
    a simplicial complex $K \supseteq N(\Ucal)$,    
    a non-trivial cohomology class $\alpha \in \simphom(K;\Zbb)$, and an inner product $\innerprod$ on $\simpco(K;\Rbb)$}
    \Ensure{a function $\sparsecirccoords_{\alpha} : B \to \circle$}
    \State Let $\theta \coloneqq \harmrep(\alpha, \innerprod)$
    \State Let $\circcoords_{\alpha} \coloneqq \sparseintegrate^\Phi_\theta$
    \end{algorithmic}
        \caption{The Sparse Circular Coordinates Algorithm}
        \label{algorithm:sparse-circular-coordinates}
\end{algorithm}

\begin{algorithm}[H]
\setstretch{1.35}
    \begin{algorithmic}[1]
    \Require{
    a finite open cover $\Ucal = \{U_x\}_{x \in I}$ of a topological space $B$, a partition of unity $\Phi = \{\phi_x\}_{x \in I}$ subordinate to $\Ucal$,
    a simplicial complex $K \supseteq N(\Ucal)$,    
    l.i.~cohomology classes $\alpha_1, \dots, \alpha_k \in \simphom(K;\Zbb)$, and inner product $\innerprod$ on $\simpco(K;\Rbb)$}
    \Ensure{a function $\sparsetorcoords_{\alpha} : B \to \torus^k$}
    \State Let $\theta_1, \dots, \theta_k \coloneqq \lowenergyreps(\alpha_1, \dots, \alpha_k, \innerprod)$
    \State Let $\sparsetorcoords_{\alpha}  \coloneqq (\sparseintegrate^\Phi_{\theta_1}, \dots, \sparseintegrate^\Phi_{\theta_k})$
    \end{algorithmic}
        \caption{The Sparse Toroidal Coordinates Algorithm}
        \label{algorithm:sparse-toroidal-coordinates}
\end{algorithm}

As in previous cases, we remark that, although the sparse cocycle integration subroutine (\cref{algorithm:sparse-cocycle-integration}) depends on arbitrary choices, all choices yield rotationally equivalent outputs.

\medskip

We now show that, when $B$ is a closed Riemannian manifold, there is a choice of inner product on cocycles that coincides with the Dirichlet form between the corresponding circle-valued maps, making the analogy in \cref{remark:inner-product-is-dirichlet-energy} formal.

\begin{definition}
    Let $\Ucal = \{U_x\}_{x \in I}$ be a finite open cover of a closed Riemannian manifold $\Mcal$ and let $\Phi = \{\phi_x\}_{x \in I}$ be a smooth partition of unity subordinate to $\Ucal$.
    Define the inner product $\innerprod_D$ on $\simpco(N(\Ucal);\Rbb)$ by 
    \[
        \langle\theta,\eta\rangle_D \coloneqq
        \frac{1}{2} \, \sum_{w,y,z \in I} D_{wyz}\; \theta^{wy}\, \eta^{wz},\;\;
        \text{where}\;\; D_{wyz} \coloneqq \int_{b \in \Mcal} \;\langle d(\phi_y)_b, d(\phi_z)_b \rangle_F\; \phi_w(b)\; \dsf\mu(b).
    \]
\end{definition}
Note that the quantities $D_{wyz}$ do not depend on the cocycles.

\begin{theorem}
    \label{proposition:isometry}
    Let $\Ucal = \{U_x\}_{x \in I}$ be a finite open cover of a closed Riemannian manifold $\Mcal$, let $K \supseteq N(\Ucal)$, and let $\Phi = \{\phi_x\}_{x \in I}$ be a smooth partition of unity subordinate to $\Ucal$.
    Assume $\theta,\eta \in \simpco(K;\Rbb)$ are such that $[\theta], [\eta] \in \simphom(K;\Rbb)$ are in the image of $\iota : \simphom(K;\Zbb) \to \simphom(K;\Rbb)$.
    Let $f = \sparseintegrate_\theta^\Phi : \Mcal \to \circle$ and 
    $g =  \sparseintegrate_\eta^\Phi$.
    Then, $f$ and $g$ are smooth and $D(f,g) = \langle \theta, \eta\rangle_D$.
\end{theorem}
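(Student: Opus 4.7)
The plan is to verify that $f$ and $g$ are well-defined smooth maps into $\circle$ via explicit local lifts, compute the differential $df$ in these local coordinates, and then perform a direct calculation that transforms the integral defining $D(f,g)$ into the sum defining $\langle \theta, \eta \rangle_D$.

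First, I would unpack \cref{algorithm:sparse-cocycle-integration}: on each open set $U_y$, the map $f$ is represented by the real-valued lift $\tilde{f}_y(b) = \tau_y + \sum_{z \in I} \phi_z(b)\,\theta^{yz}$. Because $\phi_z$ is supported in $U_z$, only indices $z$ with $\{y,z\} \in N(\Ucal) \subseteq K$ contribute, so the sum is finite and $\tilde{f}_y$ is smooth on $U_y$. To check that the local maps $q \circ \tilde{f}_y$ agree on overlaps $U_y \cap U_{y'}$, I would combine (i) the cocycle identity $\theta^{yz} - \theta^{y'z} = \theta^{yy'}$ on each $2$-simplex $\{y,y',z\}$ (which applies whenever $\phi_z(b) \neq 0$, since then $b \in U_y \cap U_{y'} \cap U_z$), (ii) the partition-of-unity identity $\sum_z \phi_z(b) = 1$, and (iii) the fact that $\tau_{y'} - \tau_y \equiv \theta^{yy'} \pmod{\Zbb}$, which uses that $[\theta]$ lies in the image of $\iota$ so that path integrals around loops are integers. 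Together these show $\tilde{f}_{y'}(b) - \tilde{f}_y(b) \in \Zbb$ on overlaps, so $f$ is a well-defined smooth map $\Mcal \to \circle$; the same argument applies to $g$.

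Next, using the canonical identification $T\circle \cong \circle \times \Rbb$ recalled in the background, the differential of $f$ is the differential of any local lift, giving
\[
df_b = \sum_{z \in I} \theta^{yz}\, d(\phi_z)_b \quad \text{for } b \in U_y,
\]
and the analogous formula $dg_b = \sum_z \eta^{yz}\, d(\phi_z)_b$ on $U_y$. Differentiating $\sum_z \phi_z \equiv 1$ yields $\sum_z d(\phi_z)_b = 0$, which provides an independent check that these expressions do not depend on the choice of $y$.

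Finally, I would decompose the Dirichlet form using the partition of unity:
\[
D(f,g) = \tfrac{1}{2} \sum_{w \in I} \int_\Mcal \phi_w(b)\, \langle df_b, dg_b \rangle_F\, \dsf\mu(b).
\]
On the support of $\phi_w$ I may take $y = w$ in the local formulas for $df$ and $dg$, expand bilinearly, exchange the finite sums with the integral, and recognize the resulting integrals as the coefficients $D_{wyz}$, which yields $D(f,g) = \tfrac{1}{2} \sum_{w,y,z} D_{wyz}\, \theta^{wy}\, \eta^{wz} = \langle \theta, \eta \rangle_D$. The main obstacle is the first step: patching the local lifts $\tilde{f}_y$ into a genuinely smooth map $\Mcal \to \circle$, which requires the cocycle relation, integrality of $[\theta]$, and the partition-of-unity identity to interact correctly. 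Once smoothness is in place, the remainder is a routine bilinear computation.
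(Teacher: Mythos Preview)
Your proposal is correct and follows essentially the same route as the paper: insert the partition of unity $\sum_w \phi_w = 1$ into the integral defining $D(f,g)$, use the local formula $df_b = \sum_y \theta^{wy}\, d(\phi_y)_b$ on the support of $\phi_w$, expand bilinearly, and identify the resulting integrals as $D_{wyz}$. The paper's proof consists of exactly this chain of equalities and nothing more; in particular, it does not spell out the well-definedness and smoothness of $f$ and $g$ or the derivation of the local formula for $df$, which you have additionally (and correctly) supplied.
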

We prove \cref{proposition:isometry} in
\cref{section:proof-of-isometry}.
We conclude by giving a heuristic for computing an estimate $\innerprod_{\widehat{D}}$ of $\innerprod_D$.
Addressing the consistency of this heuristic is left for future work.

\begin{construction}
    \label{construction:heuristic}
    Let $X \subseteq \Mcal \subseteq \Rbb^n$ be a finite sample of a smoothly embedded closed manifold.
    Assume given a subsample $I \subseteq X$ as well as $\epsilon > 0$ such that $\Mcal \subseteq \bigcup_{x \in I} B(x,\epsilon)$.
    For $w,y,z \in I$, we seek to estimate $D_{wyz}$, where the open cover is taken to be $\Ucal = \{B(x,\epsilon)\cap \Mcal\}_{x \in I}$ and $\Phi = \{\phi_x\}_{x \in I}$ is a smooth partition of unity subordinate to $\Ucal$.
    \begin{enumerate}
        \item Form a neighborhood graph $G$ on $X_w \coloneqq X \cap B(w,\epsilon)$.
        For instance, this can be done by selecting $k \in \Nbb$ and using an undirected $k$-nearest neighbor graph.
        \item Compute weights $h(a,b) \geq 0$ for the edges $(a,b) \in G$.
        For instance, this can be done by selecting a radius $\delta > 0$ and letting $h(a,b) = \exp(-\|a-b\|^2/\delta^2)$.
        \item For $a \in X_w$, let $N(a) = \{b \in G \mid (a,b) \in G\}$, and define 
        \[
            \widehat{D_{wyz}} = \sum_{a \in G}  \left( \frac{1}{N(a)}\sum_{b \in N(a)} h(a,b) \; (\phi_y(b) - \phi_y(a)) \; (\phi_z(b) - \phi_z(a)) \right) \phi_w(a).
        \]
    \end{enumerate}
\end{construction}

\begin{remark}
    \label{remark:inner-product-used}
We have implemented the estimated inner product $\innerprod_{\widehat{D}}$ in \cite{implementation-2}.
In all examples we have considered, running the algorithms in this paper with inner product $\innerprod_{\widehat{D}}$ on one hand, and with the 
de Silva, Morozov, and Vejdemo-Johansson inner product $\innerprod_{\SMV}$ (\cref{equation:SMV-inner-product}) on the other, gives results that are essentially indistinguishable.
For this reason, and for concreteness, in \cref{section:examples} we use $\innerprod_{\SMV}$.
We leave the question of when and why the two inner products give such similar results for future work.
\end{remark}

\section{Examples}
\label{section:examples}

We compare the output of the Sparse Circular Coordinates Algorithm~\cite{perea2020sparse} run independently on several cohomology classes with that of the Sparse Toroidal Coordinates Algorithm.
We use the DREiMac~\cite{tralie2021dreimac} implementation of the Sparse Circular Coordinates Algorithm and our extension implementing the Sparse Toroidal Coordinates Algorithm.
The code together with Jupyter notebooks replicating the examples here can be found at \cite{implementation-2}.

The examples include a synthetic genus two surface (Sec.~\ref{subsec: genus2surface}), a dataset from \cite{LEDERMAN2018509} of two figurines rotating at different speeds (Sec.~\ref{section:dog-yoda}), a solution set of the Kuramoto--Sivashinsky equation obtained with Mathematica \cite{mathematica} (Sec.~\ref{section:KS}), and a synthetic dataset modeling neurons tuned to head movement of bats (Sec.~\ref{section:neuroscience}). 
We use the following pipeline.

\begin{pipeline}
    \label{construction:pipeline-examples}
Assume we are given a point cloud $X \subseteq \Rbb^n$.
\begin{enumerate}
    \item Compute a subsample $I \subseteq X$ using maxmin sampling (see \cite{DYER1985, GONZALEZ1985, perea2020sparse}).
    \item Fix a large prime $p$; we take $p = 41$.
    \item Compute Vietoris--Rips persistent cohomology of $I$ in degree $1$ with coefficients in $\Zbb/p\Zbb$.
    \item Looking at the persistence diagram, identify a filtration step $\epsilon > 0$ at which cohomology classes $\beta_1, \dots, \beta_k \in \simphom(\rips_\epsilon(I),\Zbb/p\Zbb)$ of interest to the user are alive.
    Do this in such a way that $X \subseteq \cup_{x \in I} B(x,\epsilon/2)$.
    \item Note that $\Ucal = \{B(x,\epsilon/2)\}_{x \in I}$ covers $X$ and define $K \coloneqq \rips_\epsilon(I) \supseteq N(\Ucal)$.
    \item Lift $\beta_1, \dots, \beta_k \in \simphom(K,\Zbb/p\Zbb)$ to classes $\alpha_1, \dots, \alpha_k \in \simphom(K,\Zbb)$ (see \cite[Section~2.4]{silva-morozov-vejdemo}).
    \item Choose a partition of unity subordinate to $\Ucal$ (see \cite[Section~4]{perea2020sparse}).
    We use the inner product $\innerprod_{\SMV}$ on cocycles (as explained in \cref{remark:inner-product-used}).
    \item On one hand, run the Sparse Circular Coordinates Algorithm (\cref{algorithm:sparse-circular-coordinates}) on each class $\alpha_j$ separately, and get $k$ circle-valued maps $X \to \circle$.
    \item On the other hand, run the Sparse Toroidal Coordinates Algorithm (\cref{algorithm:sparse-toroidal-coordinates}) on all classes $\alpha_1, \dots, \alpha_k$ simultaneously, to again get $k$ circle-valued maps $X \to \circle$.
\end{enumerate}
\end{pipeline}

In order to show that the Sparse Toroidal Coordinates Algorithm returns coordinates with lower correlation and energy, we quantify the performance of the two algorithms using the estimated Dirichlet correlation matrix
(see \cref{section:estimatingdirichlet})
of the circle-valued functions obtained from them.
When the functions are obtained from the Sparse Circular Coordinates Algorithm (resp.~Sparse Toroidal Coordinates Algorithm), we denote the correlation matrix by $D_{SCC}$ (resp.~$D_{STC}$). Note that diagonal correlation matrices reflect complete independence of coordinates. Hence, we interpret correlations matrices that are close to being diagonal as indicating low correlation and high independence of recovered coordinates.

The correlation computations depend on two parameters (a $k$ for a $k$-nearest neighbor graph and a choice of edge weights). We use $k = 15$ and weights related to the scale of the data, but note that the results are robust with the respect to these choices.

We also display the change of basis matrix $M$ (as in \cref{algorithm:solving-eq-1}) that relates the torus-valued maps output by the two algorithms.

\subsection{Genus two surface}\label{subsec: genus2surface}

We apply \cref{construction:pipeline-examples} on a densely sampled surface of genus two (\cref{figure:example-coloring}), as in \cite[Section~3.9]{silva-morozov-vejdemo}.
As expected, persistent cohomology returns four high persistence features.
The resulting circular coordinates obtained by applying the Sparse Toroidal Coordinates Algorithm are shown in \cref{fig:gen2}~(Right).
For comparison, we show the circular coordinates obtained by applying the Sparse Circular Coordinates Algorithm to each cohomology class separately \cref{fig:gen2}~(Left).
The Dirichlet correlation matrices are as follows:
\[
D_{SCC} = \begin{pmatrix}
   3 & 2.4& 0 &-2.4\\
   2.4& 4.8& 0 &-4.8\\
   0 & 0 &22.6&11.2\\
  -2.4&-4.8&11.2&19.5
\end{pmatrix}\,,\;\;\;
D_{STC} = \begin{pmatrix}
 3 & -0.6 & 0 &  0 \\
-0.6 &  3 &   0 &   0 \\
 0 &   0 &  14.9 &  3.5\\
 0 &   0 &   3.5 & 14.8
\end{pmatrix} .
\]

\subsection{Lederman--Talmon dataset}
\label{section:dog-yoda}
We run \cref{construction:pipeline-examples} on a dataset collected and studied by Lederman and Talmon in~\cite{LEDERMAN2018509}.
In this example, two figurines \emph{Yoda} (the green figure on the left) and \emph{Dog} (the bulldog figure on the right) are situated on rotating platforms; see Figure~\ref{fig: dogyoda}.

Since each image is characterized by a rotation $(\phi_1,\phi_2)$ of the two figurines, we interpret the time series of images as an observation of a dynamical system on a two-torus.
Because the frequencies of rotation of both figurines have a large least-common-multiple, we expect the set of toroidal angles $(\phi_1,\phi_2)$  to comprise a dense sample of the torus, and verify this by treating the temporal sequence of images 
as a vector-valued time series and compute its sliding window persistence with window length $d=4$ and time delay $\tau=1$
(see Appendix~\ref{section:slidingwindow}).  


\begin{figure}[H]
    \centering
    \subfloat{\includegraphics[width=0.25\textwidth]{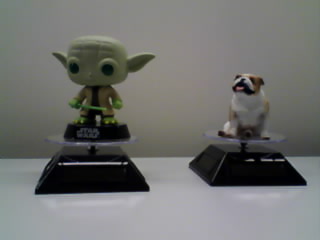}}
    \hfill
    \subfloat{\includegraphics[width=0.25\textwidth]{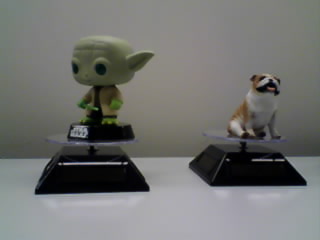}}
    \hfill
    \subfloat{\includegraphics[width=0.25\textwidth]{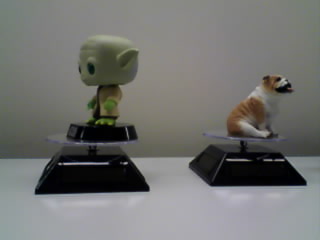}}
    \hfill
    \raisebox{-0.2cm}{\subfloat{\includegraphics[width=0.2\textwidth]{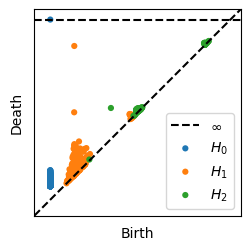}}}
    \caption{\textit{Left:} A sample of different images in the dataset.
    \textit{Right:} The sliding window persistence diagram of the data, showing two prominent 1-dimensional cohomology classes. \emph{Yoda}'s platform rotates clockwise completing about 310 cycles during the experiment, while in the same time \emph{Dog}'s platform completes about 450 cycles rotating counterclockwise. 
The data we consider are a collection of images of these rotating platforms captured from a fixed viewpoint.}
    \label{fig: dogyoda}
\end{figure}

In Figures~\ref{fig: yodadag_CC} and~\ref{fig: yodadag_CCnew}, we display the result of applying the Sparse Circular Coordinates Algorithm and the Sparse Toroidal Coordinates Algorithm to the sliding window point cloud of the dataset, respectively. 
Here, we show a sample of images as parameterized by the toroidal coordinates obtained from both algorithms.
%
The Dirichlet correlation matrices and change of basis matrix are as follows:
\[ D_{SCC} =
\begin{pmatrix}
 3.07 & -3.08\\
-3.08 & 10.48
\end{pmatrix}
\,,\;\;
D_{STC} = 
\begin{pmatrix}
3.07& 0\\
0    & 7.39
\end{pmatrix}
\,,\;\;
M=
\begin{pmatrix}
1 & 0\\
1    & 1
\end{pmatrix}.
\]

\begin{figure}[H]
    \centering
    \includegraphics[width=0.93\textwidth]{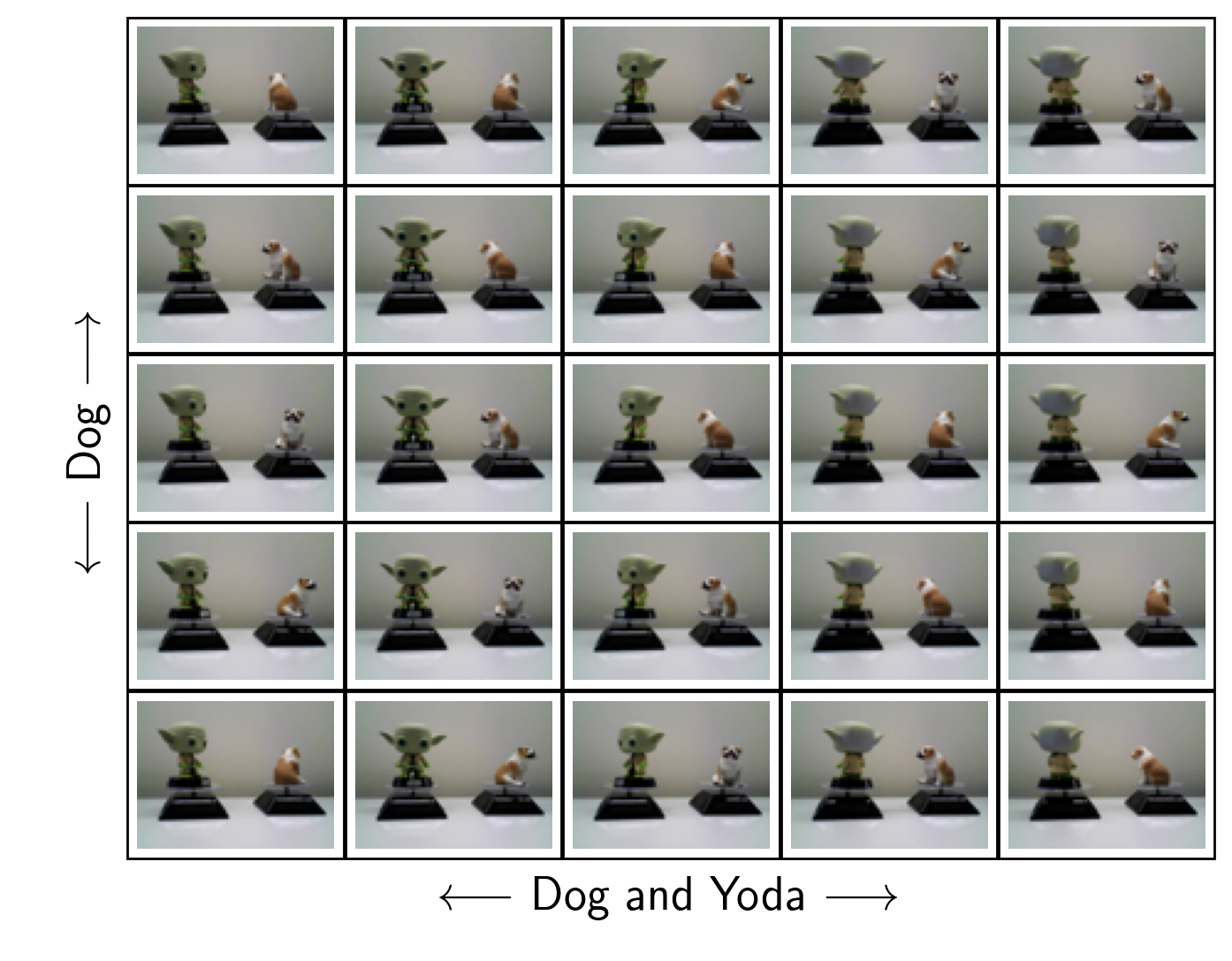}
    \caption{The vertical coordinate parameterizes \textit{Dog}'s rotation, while the horizontal coordinate parameterizes the rotation of both figurines.}
    \label{fig: yodadag_CC}
\end{figure}
\begin{figure}[H]
    \centering
    \includegraphics[width=0.93\textwidth]{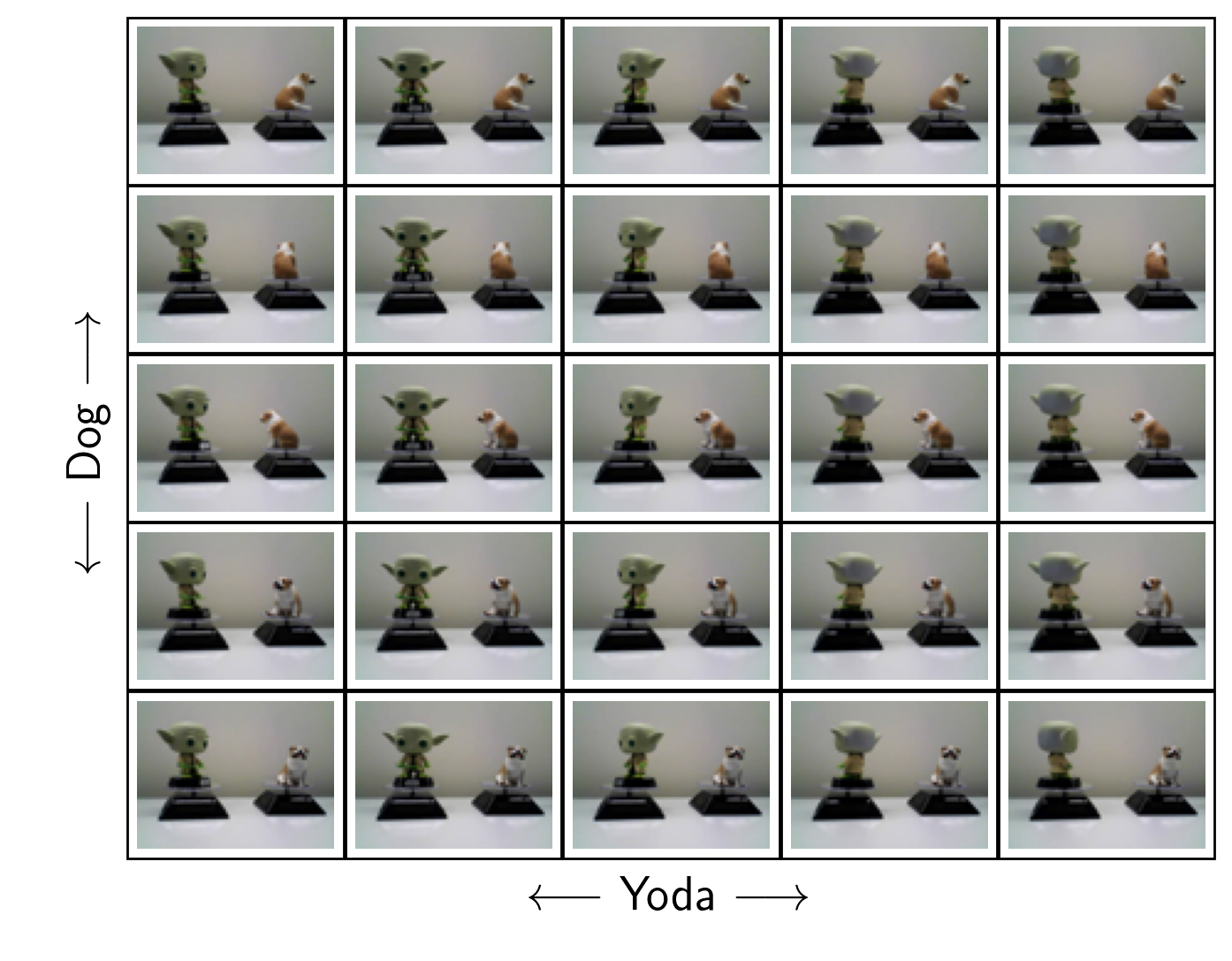}
    \caption{Vertical coordinate parameterizes \textit{Dog}'s rotation; horizontal parameterizes \textit{Yoda}'s.}
    \label{fig: yodadag_CCnew}
\end{figure}

\subsection{Kuramoto--Sivashinsky Dynamical Systems}
\label{section:KS}
An example of a one-dimensional Kuramoto--Sivashinsky (KS) equation is the following fourth order partial differential equation:
\[\frac{\partial u(x,t)}{\partial t} + 4 \;\frac{\partial^4u (x,t)}{\partial x^4} + 53.3\;\frac{\partial^2 u(x,t)}{\partial x^2} + 53.3\;u(x,t)\;\frac{\partial u(x,t)}{\partial x} = 0\]
with periodic boundary conditions $u(x,0) = \sin(x)$ and $u(0,t) = u(2\pi, t)$. 
The general family of KS equations \cite{hyman1986kuramoto} have gained popularity from their simple appearance and their ability to produce chaotic spatiotemporal dynamics. 
They have been shown to model pattern formations in several physical contexts; for instance \cite{kuramoto1976persistent, michelson1977nonlinear, sivashinsky1980irregular}.

The underlying dynamical system is toroidal. Indeed, it is controlled by two frequencies: one comes from oscillation in time, the other is dictated by the speed of the traveling wave along the periodic domain. 
However, the dynamic is not periodic and the trajectory of any initial state eventually densely fills out the torus. 
We represent the solution $u(x,t)$ to this equation as a heatmap in \cref{fig: KS} (Left).
The horizontal axis refers to time $t$, the vertical axis refers to the spatial variable $x$. At each time, $u(x,t)$ is periodic and a slice of it can be seen in \cref{fig: KS} (Middle).
We treat $u(x,t)$ as a vector valued time series $f(t): = u(-,t)$ and compute the sliding window persistence
(\cref{section:slidingwindow})
of $f$ with parameters $d = 5$ and $\tau =4$; see \cref{fig: KS} (Right) for the resulting persistence diagram.

\begin{figure}[H]
    \centering
    \includegraphics[width=0.3\textwidth]{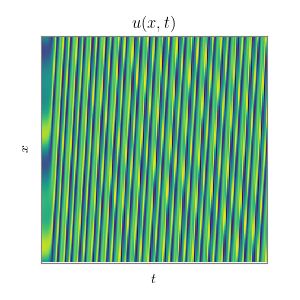}
    \hfill
    \includegraphics[width=0.3\textwidth]{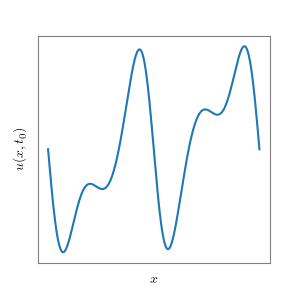}\hfill
    \raisebox{-0.45cm}{\includegraphics[width=0.32\textwidth]{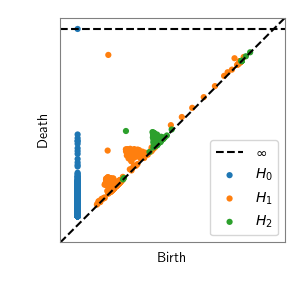}}
    \caption{(Left) The solution to the KS equation as a heatmap; (Middle) A slice of $u(x,t)$ at a fixed $t_0$; (Right) The sliding window persistence digaram.}
    \label{fig: KS}
\end{figure}


In Figures~\ref{fig: KSPM_old} and~\ref{fig: KSPM_new}, we display the result of applying the Sparse Circular Coordinates Algorithm and the Sparse Toroidal Coordinates Algorithm to the sliding window embedding of the dataset, respectively. 
As in Example~\ref{section:dog-yoda}, we show sample data points (in this case waves) as parameterized by the toroidal coordinates obtained from both algorithms.

To verify that the vertical and horizontal components of \cref{fig: KSPM_new} are indeed parameterizing oscillation and traveling, respectively, we partition the dataset in 50 bins, according to the vertical coordinate, and in each bin we compute all pairwise rotationally invariant $L^2$ distances between the waves.
We show the histogram of distances in \cref{figure:density-ks}.
The Dirichlet correlation matrices and change of basis are as follows:
\[ D_{SCC} =
\begin{pmatrix}
1.1&   6.1\\
6.1& 112.3
\end{pmatrix}
\,,\;\;
 D_{STC}=
\begin{pmatrix}
1.1 &  1\\
1 & 76.6
\end{pmatrix}
\,,\;\;
M=
\begin{pmatrix}
1 &  0\\
-5 & 1
\end{pmatrix}.
\]


\begin{figure}[H]
\centering
    \includegraphics[width=.85\textwidth]{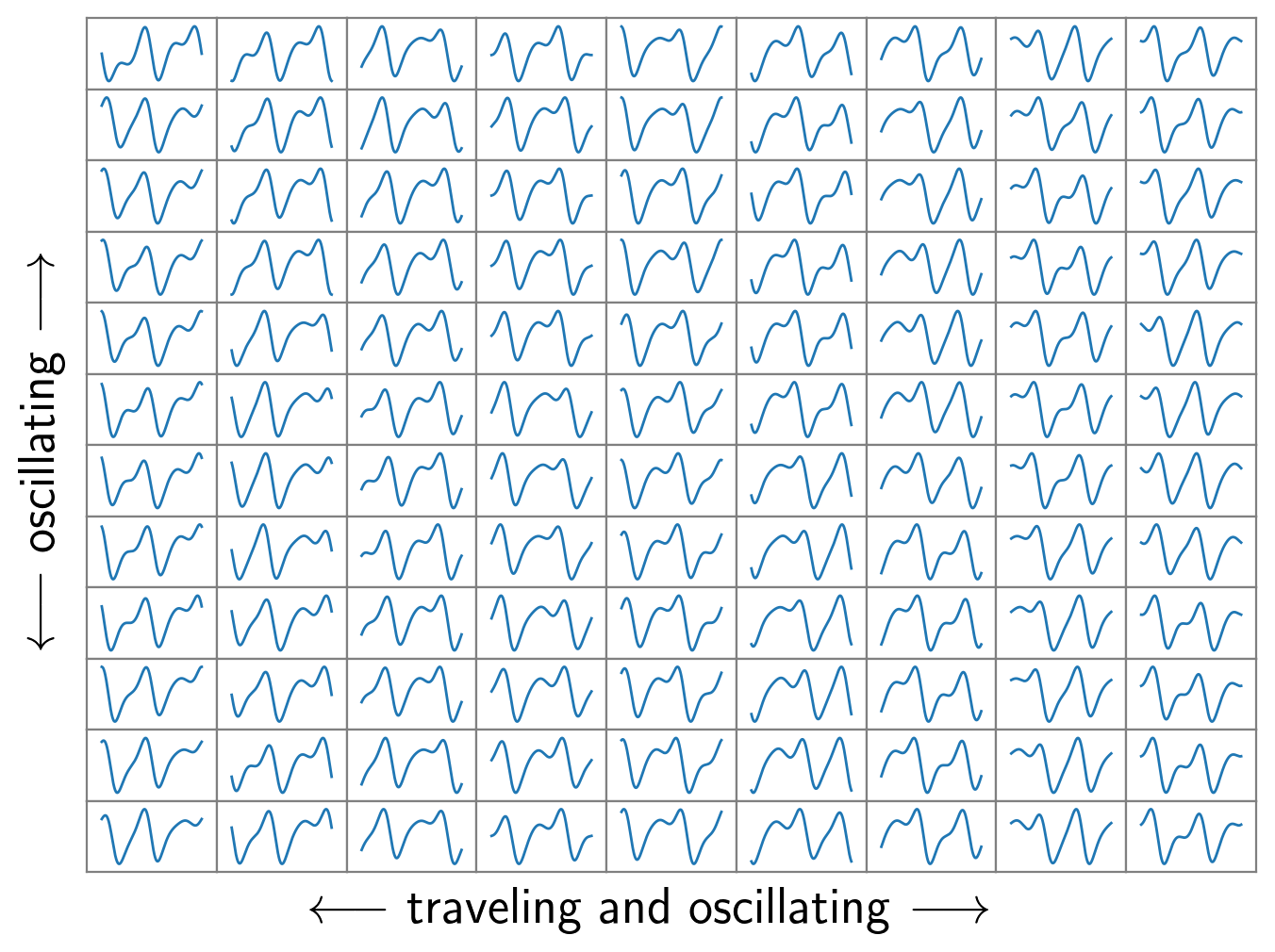}
    \caption{Oscillatory behavior vertically and combination of traveling and oscillatory horizontally.}
    \label{fig: KSPM_old}
\end{figure}
\begin{figure}
    \centering
    \includegraphics[width=.85\textwidth]{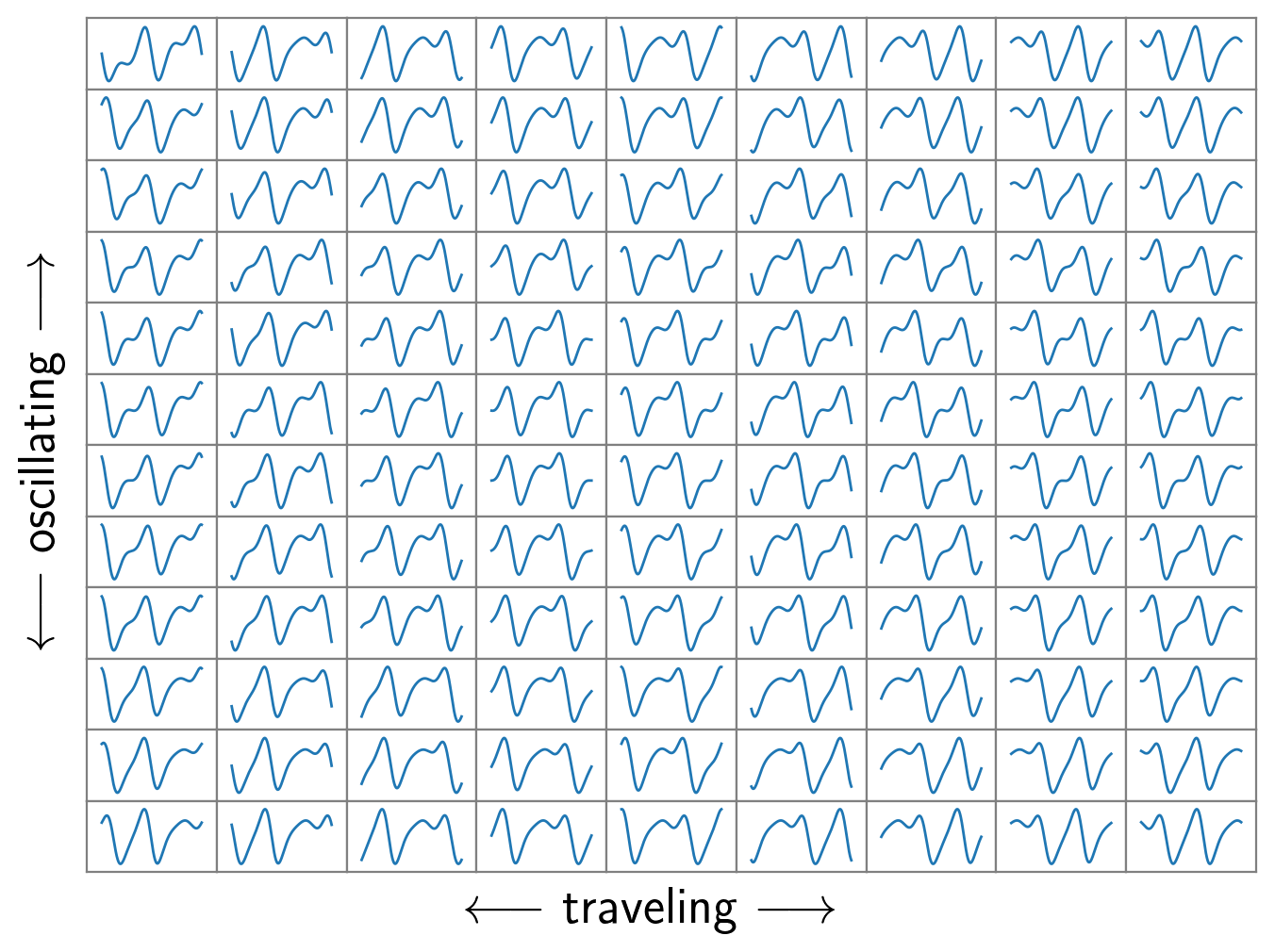}
    \caption{Traveling waves parameterized horizontally and oscillations parameterized vertically.}
    \label{fig: KSPM_new}
\end{figure}

\begin{figure}
    \centering
    \includegraphics[width=0.6\textwidth]{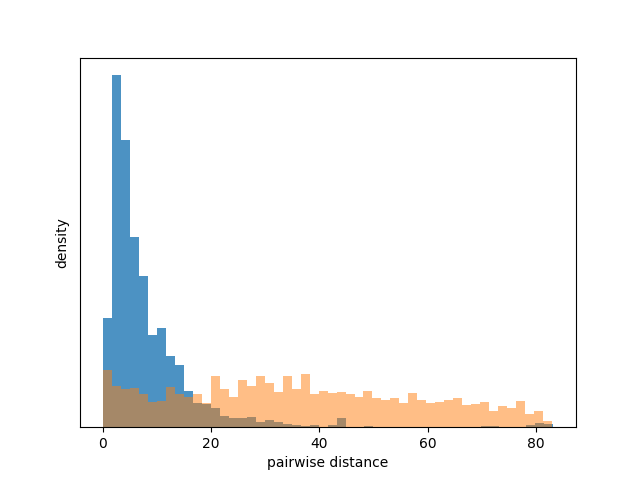}
    \caption{Density of pairwise distances using the parameterization of the Toroidal Coordinates Algorithm is in blue, while density of pairwise distances using the parameterization given by the Circular Coordinates Algorithm is in red.
    This suggests that, indeed, the horizontal coordinate of \cref{fig: KSPM_new} parameterizes rotation only while the one of \cref{fig: KSPM_old} does not.}
    \label{figure:density-ks}
\end{figure}


\subsection{Synthetic Neuroscience Example}
\label{section:neuroscience}
We show that our methods are a viable way of constructing informative circle-valued representations of neuroscientific data.
Place cells in the mammalian hippocampus have spatially localized receptive fields that encode position by firing rapidly at specific locations as one navigates an environment \cite{place_cells}.
It is shown in \cite{Finkelstein_three_dimensional_coding} that head direction of bats is encoded in a similar way: certain neurons are tuned to pitch, others to azimuth, and others to roll, each using a circular coordinate system to do so.

We consider a synthetic dataset inspired by these types of neuronal responses to stimuli. 
Suppose three populations of neurons $P_1$, $P_2$, and $P_3$ are tuned to elevation, azimuth, and roll, respectively.
This means, for instance, that if neuron $n \in P_1$ is tuned to a head elevation of $45$ degrees, then $n$ fires most rapidly when the head is at an elevation of $45$ degrees, fires less rapidly if the head is at an elevation of, say $35$ or $55$ degrees, and maintains low activity near an elevation of $0$ or 90 degrees.
Now, suppose we record the firing rates of neurons in populations $P_1,P_2,$ and $P_3$ for a duration of $T$ time steps while an animal moves its head freely.
Letting $N$ denote the total number of recorded neurons, we can record this data as an $N \times T$ matrix $M$, where $M_{i,j}$ corresponds to the firing rate of neuron $i$ at time step $j$.

We interpret $M$ as a collection of $T$ points in $\mathbb{R}^n$ and consider the problem of recovering three circular coordinates $M \to \circle$ (one each for elevation, azimuth, and roll) that map a point of $M$, thought of as a time step, to the correct head orientation at that time.
We construct a synthetic dataset simulating the situation above
(\cref{section:neuro-data})
and run it through \cref{construction:pipeline-examples}.
\cref{fig:neuro_persistence} shows the resulting persistence barcode, which contains three prominent $1$-dimensional cohomology classes.

We then use the Sparse Circular Coordinates Algorithm to recover a circle-valued map from each of the three most prominent 1-dimensional cohomology classes. 
To determine whether we successfully recover all three orientations, we use a scatter plot as in \cite[Section~3.2]{silva-morozov-vejdemo}, and display the three recovered maps $M \to \circle$ against the ground truth.
The Sparse Circular Coordinates Algorithm run independently on each cohomology class fails to recover the three head orientations (Figure~\ref{fig_neuro_main_fig}, left) in many of our runs, while the Toroidal Coordinates Algorithm always recovers the three coordinates (\cref{fig_neuro_main_fig}, right).
The Dirichlet correlation matrices and the change of basis are as follows:
\[ D_{SCC}=
\begin{pmatrix}
 51.6& -0.9&  2.7\\
 -0.9& 50.2&-99.7\\
  2.7&-99.7&249.8
\end{pmatrix}
\,,\;\;
D_{STC}=
\begin{pmatrix}
51.6& -0.9&  0.8\\
-0.9& 50.2&  0.5\\
 0.8&  0.5& 51.1
\end{pmatrix}
\,,\;\;
M=
\begin{pmatrix}
1& 0& 0\\
0& 1& 0\\
0& 2& 1
\end{pmatrix}.
\]

\begin{figure}[h]
\centering
\includegraphics[width=.5\linewidth]{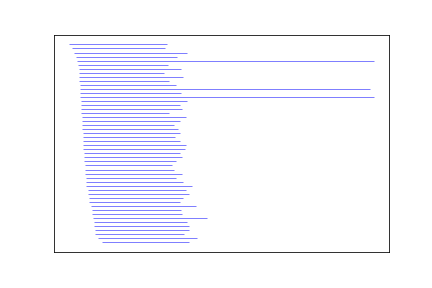}
\vspace*{-0.4cm}
\caption{Persistence barcode of exhibiting three prominent 1-dimensional cohomology classes.}
\label{fig:neuro_persistence}
\end{figure}

\begin{figure}[H]

    \begin{subfigure}[b]{0.45\linewidth}
    \centering
    Sparse Circular Coordinates
    
    \includegraphics[scale=0.27]{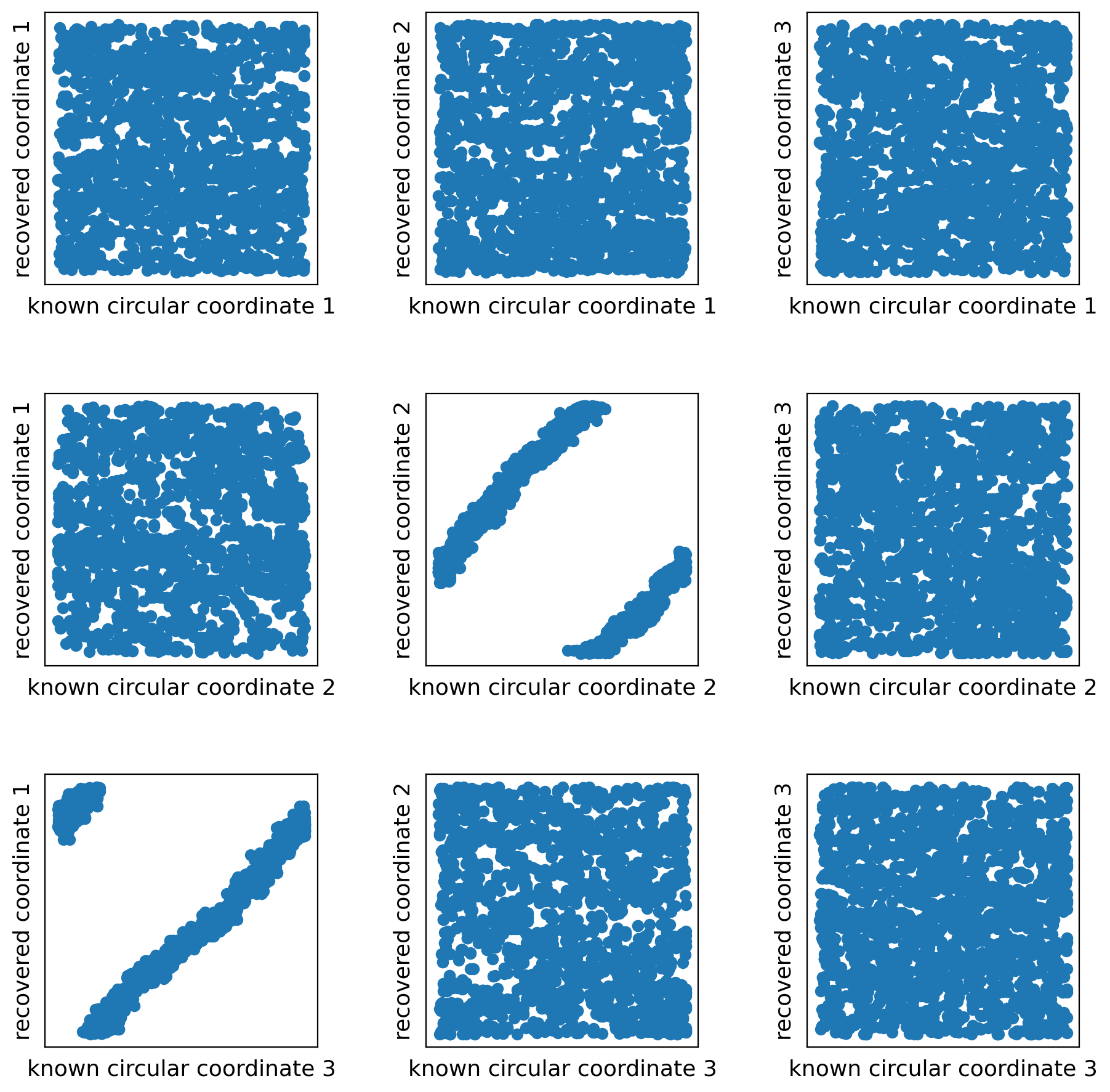}\end{subfigure}
    \hfill
    \begin{subfigure}[b]{0.45\linewidth}
    \centering
    Sparse Toroidal Coordinates
    
    \includegraphics[scale=0.27]{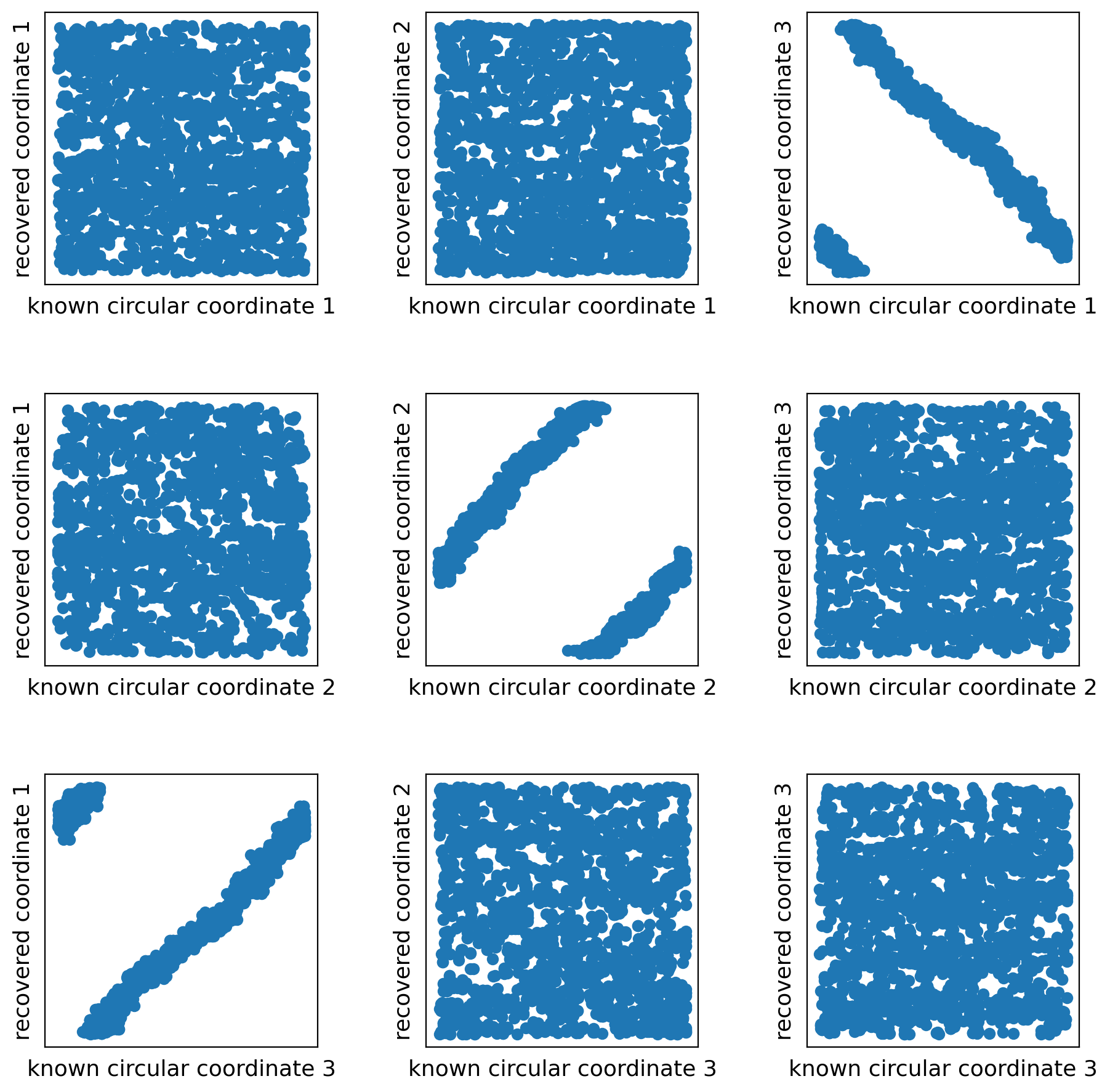}
    \end{subfigure}
    
    \caption{Recovered versus known circular coordinates using the Sparse Circular Coordinates Algorithm and the Toroidal Coordinates Algorithm.}
    \label{fig_neuro_main_fig}
\end{figure}


 

\newpage

\bibliography{bibliography}

\appendix

\newpage
\section{Proofs}

\subsection{Correctness of lattice reduction procedure}
\label{section:proofs-lattice-reduction}

We fix a finite simplicial complex $K$, cohomology classes $\alpha_1, \dots, \alpha_k \in \simphom(K;\Zbb)$, and an inner product $\innerprod$ on $\simpco(K;\Rbb)$.

We start with a definition.
Recall from \cref{section:background} that we have a linear map $\delta : \simpchzero(K;\Rbb) \to \simpco(K;\Rbb)$ and that $\simphom(K;\Rbb) = \simpco(K;\Rbb)/\Im(\delta)$.
Using the inner product on $\simpco(K;\Rbb)$, we get the orthogonal projection linear map $\proj_{\Im(\delta)} : \simpco(K;\Rbb) \to \Im(\delta)$.

\begin{definition}
    A cocycle $\theta \in \simpco(K;\Rbb)$ is \define{harmonic} with respect to the inner product $\innerprod$ if it satisfies
    $\proj_{\Im(\delta)}(\theta) = 0$.
    We say that $\theta$ is a \define{harmonic representative} of a cohomology class $\alpha \in \simphom(K;\Rbb)$ if $\theta$ is harmonic and $[\theta] = \alpha$.
    The subspace of harmonic cocycles is denoted by $\harco(K) \subseteq \simpco(K;\Rbb)$.
\end{definition}

Note that, by definition, we have $\harco(K) = \Im(\delta)^\bot$, and the linear map $h : \harco(K) \to \simphom(K;\Rbb)$ given by the following composite
\begin{equation}
    \label{equation:hodge-isomorphism}
    \harco(K) = \Im(\delta)^\bot \xhookrightarrow \Im(\delta)^\bot \oplus \Im(\delta) = \simpco(K;\Rbb) \twoheadrightarrow \simpco(K;\Rbb)/\Im(\delta) = \simphom(K;\Rbb)
\end{equation}
is an isomorphism of real vector spaces.
Let $h^{-1} : \simphom(K;\Rbb) \to \harco(K)$ denote the inverse linear isomorphism.

\begin{lemma}
    \label{lemma:unique-harmonic-representative}
    Any cohomology class $\alpha \in \simphom(K;\Rbb)$ admits a unique harmonic representative.
    Moreover, a cocycle $\theta \in \simpco(K;\Rbb)$ is harmonic if and only if it is a solution of $\argmin \left\{ \|\eta\|^2 \;|\; [\eta] = [\theta] \in \simphom(K;\Rbb)\right\}$.
\end{lemma}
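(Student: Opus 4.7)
The plan is to establish both claims by exploiting the orthogonal decomposition $\simpco(K;\Rbb) = \harco(K) \oplus \Im(\delta)$, which holds by definition of $\harco(K) = \Im(\delta)^\bot$ since $\simpco(K;\Rbb)$ is finite dimensional.

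For the first claim, I would invoke the isomorphism $h : \harco(K) \to \simphom(K;\Rbb)$ displayed in \cref{equation:hodge-isomorphism}. Existence of a harmonic representative of any $\alpha \in \simphom(K;\Rbb)$ is then immediate: the cocycle $h^{-1}(\alpha) \in \harco(K)$ is harmonic by construction, and $[h^{-1}(\alpha)] = \alpha$. Uniqueness is equally direct: if $\theta_1, \theta_2 \in \harco(K)$ both represent $\alpha$, then $\theta_1 - \theta_2 \in \Im(\delta)$ by the definition of $\simphom(K;\Rbb) = \simpco(K;\Rbb)/\Im(\delta)$, while simultaneously $\theta_1 - \theta_2 \in \harco(K) = \Im(\delta)^\bot$; intersecting a subspace with its orthogonal complement yields $\{0\}$, so $\theta_1 = \theta_2$.

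For the minimization characterization, I would use the standard fact from the geometry of finite-dimensional inner product spaces: given a subspace $V \subseteq W$ and a vector $w_0 \in W$, the affine subspace $w_0 + V$ contains a unique point of minimum norm, namely the component of $w_0$ lying in $V^\bot$. The level set $\{\eta \in \simpco(K;\Rbb) : [\eta] = [\theta]\}$ is precisely the affine coset $\theta + \Im(\delta)$, so applying this fact with $V = \Im(\delta)$ shows that the unique minimizer is characterized by lying in $\Im(\delta)^\bot = \harco(K)$. Therefore $\theta$ itself is the minimizer if and only if $\theta$ is harmonic.

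There is no genuine obstacle here; the only care needed is in identifying the level set $\{\eta : [\eta] = [\theta]\}$ with the affine coset $\theta + \Im(\delta)$, which is tautological from the definition of cohomology. Combining the two parts, the unique harmonic representative of $\alpha$ coincides with the unique norm minimizer among all cocycles representing $\alpha$, which is exactly what justifies \cref{algorithm:harmonic-representative} returning a harmonic cocycle via least squares.
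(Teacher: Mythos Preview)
Your proposal is correct and matches the paper's own argument: both use the orthogonal decomposition $\simpco(K;\Rbb) = \Im(\delta)^\bot \oplus \Im(\delta)$ and the isomorphism $h$ for the first claim, and the Pythagorean/closest-point characterization of the orthogonal projection onto $\Im(\delta)^\bot$ for the second. The only difference is cosmetic---the paper writes out $\|\eta\|^2 = \|\eta_1\|^2 + \|\eta_2\|^2$ explicitly where you invoke the standard affine minimum-norm fact.
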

\begin{proof}
    The first statement follows from the fact that $h$ is an isomorphism.
    
    For the second statement, note that any cocycle $\eta \in \simpco(K;\Rbb)$ can be written in a unique way as $\eta_1+ \eta_2$ with $\eta_1 \in \Im(\delta)^\bot$ and $\eta_2 \in \Im(\delta)$, and this is such that $\|\eta_1\|^2 = \|\eta_1\|^2 + \|\eta_2\|^2$.
    Given $\theta \in \simpco(K;\Rbb)$, with the above notation, and by the isomorphism of \cref{equation:hodge-isomorphism}, we have $[\eta] = [\theta] \in \simphom(K;\Rbb)$ if and only if $\eta_1 = \theta_1 \in \Im(\delta)^\bot$.
    Thus, the unique $\eta$ minimizing $\min \left\{ \|\eta\|^2 \;|\; [\eta] = [\theta] \in \simphom(K;\Rbb)\right\}$ is $\eta = \theta_1$.
    And, by definition, we have $\theta = \theta_1$ if and only if $\theta$ is harmonic.
\end{proof}

In particular, $h^{-1}(\alpha)$ provides a solution to \cref{problem:minimization-circular-coordinates}.
The following is clear.

\begin{corollary}
    \label{corollary:solution-is-harmonic}
    If $\theta_1, \dots, \theta_k \in \simpco(K;\Rbb)$ is a solution to \cref{equation:minimization-toroidal-coordinates}, then $\theta_j$ is harmonic for all $1 \leq j \leq k$.\qed
\end{corollary}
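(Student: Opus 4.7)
The plan is to argue by contradiction using \cref{lemma:unique-harmonic-representative}. Suppose $\theta_1, \dots, \theta_k$ solves \cref{equation:minimization-toroidal-coordinates} but some $\theta_j$ fails to be harmonic. I will produce a new tuple with strictly smaller total squared norm that still satisfies the constraint, contradicting optimality.

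Concretely, for each $1 \leq i \leq k$ let $\theta_i' \coloneqq h^{-1}([\theta_i]) \in \harco(K)$ be the unique harmonic representative of the cohomology class $[\theta_i]$, as provided by \cref{lemma:unique-harmonic-representative}. By construction $[\theta_i'] = [\theta_i] \in \simphom(K;\Rbb)$ for every $i$, so the sets $\{[\theta_i']\}_{1 \leq i \leq k}$ and $\{[\theta_i]\}_{1 \leq i \leq k}$ coincide; in particular they generate the same Abelian subgroup of $\simphom(K;\Rbb)$, so the tuple $(\theta_1', \dots, \theta_k')$ is feasible for \cref{equation:minimization-toroidal-coordinates}.

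By the second part of \cref{lemma:unique-harmonic-representative}, for every $i$ we have $\|\theta_i'\|^2 \leq \|\theta_i\|^2$, with equality if and only if $\theta_i$ was already harmonic. Summing over $i$ gives $\sum_i \|\theta_i'\|^2 \leq \sum_i \|\theta_i\|^2$, with strict inequality as soon as at least one $\theta_j$ fails to be harmonic. This contradicts the optimality of $(\theta_1, \dots, \theta_k)$, so every $\theta_j$ must in fact be harmonic.

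There is no real obstacle here; the entire argument is a one-line perturbation using the orthogonal decomposition $\simpco(K;\Rbb) = \harco(K) \oplus \Im(\delta)$ that already underlies \cref{lemma:unique-harmonic-representative}. The only point worth flagging is that one must verify that the constraint of the optimization problem is stated at the level of cohomology classes, so that passing from each $\theta_i$ to its harmonic representative $\theta_i'$ leaves the feasibility condition unchanged; this is immediate from $[\theta_i'] = [\theta_i]$.
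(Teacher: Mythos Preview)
Your argument is correct and is exactly the intended one: the paper marks this corollary with a \qed{} and the remark ``The following is clear,'' leaving the verification to the reader; your proof spells out precisely the obvious reduction to \cref{lemma:unique-harmonic-representative} that the paper has in mind.
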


Let $R_\alpha \subseteq \harco(K)$ be the real vector space spanned by $\{h^{-1}(\alpha_1), \dots, h^{-1}(\alpha_k)\} \subseteq \harco(K)$.
Endow $R_\alpha$ with the inner product inherited from the inner product of $\harco(K) \subseteq \simpco(K)$.
Let $L_\alpha \subseteq R_\alpha$ be the lattice generated by taking all integer linear combinations of $\{h^{-1}(\alpha_1), \dots, h^{-1}(\alpha_k)\}$.
Note that this is indeed a lattice (does not have accumulation points) since $\{\alpha_1, \dots, \alpha_k\}$ is linearly independent.

\begin{lemma}
    \label{proposition:reduction-eq1-to-lattice-reduction}
    A set of cocycles $\theta_1, \dots, \theta_k \in \simpco(K;\Rbb)$ is a solution to \cref{equation:minimization-toroidal-coordinates} 
    if and only if $\{\theta_1, \dots, \theta_k\}\subseteq L_\alpha$ and 
    $\theta_1, \dots, \theta_k$ is a solution to \cref{problem:our-lattice-reduction} with lattice $L_\alpha \subseteq R_\alpha$.
\end{lemma}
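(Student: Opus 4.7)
The plan is to show that both problems are the same optimization over bases of $L_\alpha$, with the bridge being the harmonic-representative formalism of \cref{lemma:unique-harmonic-representative}. Recall that $h \colon \harco(K) \to \simphom(K;\Rbb)$ is a $\Zbb$-linear isomorphism, that $R_\alpha$ inherits its inner product from $\simpco(K;\Rbb)$, and that by definition $h$ sends $L_\alpha$ bijectively onto the $\Zbb$-subgroup of $\simphom(K;\Rbb)$ generated by $\iota(\alpha_1),\dots,\iota(\alpha_k)$.

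The key intermediate claim I would establish is that a tuple $\theta_1,\dots,\theta_k \in \simpco(K;\Rbb)$ is feasible for \cref{equation:minimization-toroidal-coordinates} and consists of harmonic cocycles if and only if it is an ordered basis of the lattice $L_\alpha$. One direction is immediate: if $\{\theta_j\}$ is a basis of $L_\alpha$, then each $\theta_j$ lies in $\harco(K)$ and $\{h(\theta_j)\} = \{[\theta_j]\}$ generates the same subgroup of $\simphom(K;\Rbb)$ as $\{\iota(\alpha_j)\}$. Conversely, if $\{\theta_j\}$ is harmonic and feasible, then $\theta_j = h^{-1}([\theta_j])$, and by the $\Zbb$-linearity of $h^{-1}$ the tuple $\{\theta_j\}$ generates the same subgroup of $\harco(K)$ as $\{h^{-1}(\iota(\alpha_j))\}$, namely $L_\alpha$; since $L_\alpha$ is free Abelian of rank $k$, any generating set of cardinality $k$ is automatically a basis.

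Given this equivalence and the fact that $\sum_j \|\theta_j\|^2$ equals the squared Frobenius norm of the tuple viewed as a basis, the lemma follows by combining with \cref{corollary:solution-is-harmonic}. For the forward direction, a solution of \cref{equation:minimization-toroidal-coordinates} is harmonic by the corollary, hence a basis of $L_\alpha$ by the intermediate claim; any other basis of $L_\alpha$ is also feasible for \cref{equation:minimization-toroidal-coordinates}, so cannot have strictly smaller objective, showing that our tuple solves \cref{problem:our-lattice-reduction}. For the converse, a minimum-Frobenius-norm basis of $L_\alpha$ is feasible for \cref{equation:minimization-toroidal-coordinates}; any feasible tuple of strictly smaller objective would, again by the corollary, be harmonic and thus itself a basis of $L_\alpha$, contradicting optimality in \cref{problem:our-lattice-reduction}.

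The main point requiring care is the ``generate the same Abelian subgroup'' condition: one must use both the $\Zbb$-linearity of $h$ and the fact that a free Abelian group of rank $k$ cannot be generated by fewer than $k$ elements, so that a generating set of the right size is forced to be a basis. Everything else---norm preservation of the inclusion $\harco(K) \hookrightarrow \simpco(K;\Rbb)$, harmonicity of minimizers, and the identification of the two objective functions---is a direct consequence of material already recorded in the excerpt.
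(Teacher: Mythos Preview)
Your approach is essentially the paper's, only spelled out in far more detail: the paper compresses everything into the single observation that a solution of \cref{equation:minimization-toroidal-coordinates} is harmonic (by \cref{corollary:solution-is-harmonic}) and hence, by the subgroup condition and the isomorphism $h$, a basis of $L_\alpha$; the rest is declared to ``follow at once.'' Your explicit intermediate claim and its two directions are exactly the content the paper leaves implicit.

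One small slip in your converse direction: you write that ``any feasible tuple of strictly smaller objective would, again by the corollary, be harmonic,'' but \cref{corollary:solution-is-harmonic} applies only to \emph{optimal} tuples, not to arbitrary feasible ones with small objective. The clean fix bypasses the corollary and uses \cref{lemma:unique-harmonic-representative} directly: given any feasible tuple $(\eta_j)$, replace each $\eta_j$ by its harmonic representative $h^{-1}([\eta_j])$; this can only decrease $\sum_j\|\eta_j\|^2$, preserves the cohomology classes (hence feasibility), and by your intermediate claim produces a basis of $L_\alpha$. Thus every feasible objective value is bounded below by the optimum of \cref{problem:our-lattice-reduction}, which is what you need. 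With that adjustment the argument is complete and matches the paper.
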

\begin{proof}
    The result follows at once from the following observation.
If $\theta_1, \dots, \theta_k \in \simpco(K;\Rbb)$ is a solution to \cref{equation:minimization-toroidal-coordinates}, then, by \cref{corollary:solution-is-harmonic}, the cocycle $\theta_j$ is harmonic for all $j$; and moreover, since by assumption the sets $\{[\theta_j]\}_{1 \leq j \leq k}$ and $\{\alpha_j\}_{1 \leq j \leq k}$ generate the same Abelian subgroup of $\simphom(K;\Rbb)$, we have that $\{\theta_j\}_{1 \leq j \leq k}$ must form a basis of $L_\alpha$.
\end{proof}

\begin{proof}[Proof of \cref{proposition:LLL-gives-approximate-solution}]
    Note that all the arguments in \cite{lenstra-lenstra-lovasz} up to \cite[Proposition~1.26]{lenstra-lenstra-lovasz} are done for arbitrary lattices $L \subseteq \Rbb^n$.
    In the proof of \cite[Proposition~1.12]{lenstra-lenstra-lovasz}, it is shown, in particular, that given any linearly independent set $X = \{x_1, \dots, x_t\} \subseteq L$, there exists a reordering of $X$, say $\{x'_1, \dots, x'_t\}$, such that $\|b_i\|^2 \leq 2^{k-1} \|x'_j\|^2$ for all $1 \leq j \leq t$.
    This proves the claim.
\end{proof}

\begin{proof}[Proof of \cref{proposition:algo-eq-1-correct}]
    Note that \cref{proposition:reduction-eq1-to-lattice-reduction} implies that finding an approximate solution to \cref{equation:minimization-toroidal-coordinates} is equivalent to finding an approximate solution to \cref{problem:our-lattice-reduction} with the lattice generated by $\{\eta_1, \dots, \eta_k\}$, which in turn is equivalent to finding an approximate solution to \cref{problem:our-lattice-reduction} with the lattice generated by the rows of $C$, by definition of the Cholesky decomposition.
    Finally, the LLL-algorithm provides such an approximate solution by \cref{proposition:LLL-gives-approximate-solution}.
\end{proof}

\subsection{Proof of \cref{proposition:analogy-toroidal-coordinates}}
\label{section:proof-of-analogy-toroidal-coords}

\begin{lemma}
    \label{lemma:energy-is-sum-of-energies}
    Let $\Mcal$ be a closed Riemannian manifold and let $k \in \Nbb$.
    Let $f = (f_1, \dots, f_k) : \Mcal \to \torus^k$ be a smooth map with $f_i : \Mcal \to \circle$ for $1 \leq i \leq k$.
    Then $E[f] = \sum_{i=1}^k E[f_i]$.
\end{lemma}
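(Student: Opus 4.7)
The plan is to reduce the statement to a purely pointwise claim about Frobenius norms of linear maps into an orthogonal direct sum, and then integrate.

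First, I would unfold the definitions. By definition of the product Riemannian metric on $\torus^k = \circle \times \dots \times \circle$, the tangent space $\tangent_{f(p)} \torus^k$ splits as an orthogonal direct sum $\bigoplus_{i=1}^k \tangent_{f_i(p)} \circle$. Under this identification, the differential of $f$ at $p$ decomposes into its components, $df_p = \bigl(d(f_1)_p, \dots, d(f_k)_p\bigr)$, with each $d(f_i)_p : \tangent_p \Mcal \to \tangent_{f_i(p)} \circle$.

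Next, I would prove the linear-algebra fact that, for any linear map $A : W \to Z$ into a finite-dimensional inner product space expressed as an orthogonal direct sum $Z = \bigoplus_{i=1}^k Z_i$, writing $A = (A_1,\dots,A_k)$ with $A_i : W \to Z_i$, one has $\|A\|_F^2 = \sum_{i=1}^k \|A_i\|_F^2$. The key observation is that the adjoint of an orthogonal direct sum is the direct sum of adjoints, so $A^*A = \sum_{i=1}^k A_i^* A_i$ as operators $W \to W$; applying $\trace$ and using its linearity gives the identity. Applied pointwise to $A = df_p$, this yields
\[
    \|df_p\|_F^2 \;=\; \sum_{i=1}^k \|d(f_i)_p\|_F^2 \qquad \text{for every } p \in \Mcal.
\]

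Finally, I would integrate this pointwise identity against the Riemannian measure $\mu$ on $\Mcal$, pull out the sum by linearity of the integral, and multiply by $\tfrac{1}{2}$, to conclude
\[
    E[f] \;=\; \tfrac{1}{2}\int_{\Mcal} \|df_p\|_F^2 \,\dsf\mu(p) \;=\; \sum_{i=1}^k \tfrac{1}{2}\int_{\Mcal} \|d(f_i)_p\|_F^2 \,\dsf\mu(p) \;=\; \sum_{i=1}^k E[f_i].
\]
There is no genuine obstacle here: the only subtlety to be careful about is the orthogonality of the direct-sum decomposition of $\tangent\torus^k$, which is precisely what the product Riemannian metric on $\torus^k$ fixed in the background section provides.
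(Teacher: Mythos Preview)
Your proposal is correct and follows exactly the approach the paper takes: the paper's proof is a one-line appeal to the definition of Dirichlet energy together with the fact that $\torus^k$ carries the product Riemannian metric, and your argument simply unpacks that appeal in detail (orthogonal splitting of the tangent space, additivity of the squared Frobenius norm, then integration).
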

\begin{proof}
    This follows at once from the definition of Dirichlet energy and the fact that $\torus^k = \circle \times \cdots \times \circle$ is endowed with the product Riemannian metric.
%
\end{proof}

%

\begin{lemma}
    \label{lemma:differential-of-circlular-coordinates}
    The function $f : \Mcal \to \circle$ obtained using \cref{construction:continuous-integration} on a $1$-form $\theta$ is smooth and satisfies $df = \theta$.
\end{lemma}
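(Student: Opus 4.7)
The plan is to verify three things about the map $f$ produced by Construction~\ref{construction:continuous-integration}: that it is well-defined as a map to $\circle$ (independent of the choice of path), that it is smooth, and that its differential equals $\theta$. Throughout I will use the local Riemannian isometry $q:\Rbb \to \circle$ together with the canonical identification $\tangent\circle \cong \circle \times \Rbb$ recalled in the background section, which allows us to treat $df$ as an $\Rbb$-valued $1$-form.

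First I would address well-definedness. Given two smooth paths $p_0, p_1$ from $x$ to $y$, their concatenation is a piecewise smooth loop $\gamma$ in $\Mcal$, and the difference of the two candidate values for $f(y)$ equals $\int_\gamma \theta$. Since $\theta$ is closed, this integral depends only on the homology class $[\gamma] \in \mathsf{H}_1(\Mcal;\Zbb)$, and by the universal coefficient pairing it equals the evaluation of $[\theta] \in \simphom(\Mcal;\Rbb)$ on $[\gamma]$. By hypothesis $[\theta]$ lies in the image of $\iota$, so this evaluation is an integer. Hence the two values agree modulo $\Zbb$, and $f$ is well-defined.

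Next I would handle smoothness locally. Fix $y_0 \in \Mcal$, a path $p_0$ from $x$ to $y_0$, and a convex normal neighborhood $U$ of $y_0$. For each $y \in U$, extend $p_0$ by the unique minimizing geodesic $\gamma_y$ from $y_0$ to $y$, yielding a smooth family of paths $p_y$ depending smoothly on $y$. The real-valued function $\tilde f(y) \coloneqq \int_0^1 \theta_{p_y(t)}(p_y'(t))\, dt$ is smooth on $U$ by standard smoothness of parameter-dependent integrals, and $f|_U = q \circ \tilde f$, so $f$ is smooth since $q$ is a local diffeomorphism.

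Finally I would compute $df$. Working in the same neighborhood $U$, and using that the local isomorphism $\tangent_q \circle \cong \Rbb$ identifies $df_y$ with $d\tilde f_y$, it suffices to show $d\tilde f_y = \theta_y$. Given $v \in \tangent_y\Mcal$, pick a smooth curve $c:(-\varepsilon,\varepsilon)\to U$ with $c(0)=y$ and $c'(0)=v$. By well-definedness, for small $s$ we may compute $\tilde f(c(s))$ using the concatenation of $p_y$ with $c|_{[0,s]}$, giving
\[
\tilde f(c(s)) - \tilde f(y) = \int_0^s \theta_{c(u)}(c'(u))\, du.
\]
Differentiating at $s=0$ by the fundamental theorem of calculus yields $d\tilde f_y(v) = \theta_y(v)$, so $df = \theta$ as desired. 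The main subtlety, and really the only nontrivial point, is the well-definedness argument; the smoothness and the differential computation reduce to routine facts about parameter-dependent integrals once a smooth local family of paths is chosen.
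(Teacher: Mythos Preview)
Your proof is correct and follows essentially the same approach as the paper's: both establish well-definedness via the fact that a closed $1$-form with integral cohomology class integrates to an integer over any loop, then obtain smoothness from a local smooth family of paths, and finally compute $df=\theta$ by a fundamental-theorem-of-calculus argument along a curve through $y$. Your version is somewhat more explicit (using convex normal neighborhoods and a local real-valued lift $\tilde f$), but the structure and key ideas are the same.
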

\begin{proof}
    Without loss of generality, we may assume that $\Mcal$ is connected.
    Note that, if $p,q : [0,1] \to \circle$ are two smooth paths between points $x,y \in \Mcal$, then 
    $\left(\int_0^1 \theta_{p(t)}(p'(t)) dt\right) \mod \mathbb{Z} = \left(\int_0^1 \theta_{q(t)}(q'(t)) dt\right) \mod \mathbb{Z}$.
    This is because the concatenation of $p$ and the inverse path of $q$ is a closed loop and any closed $1$-form representing an integral class integrates to an integer on any closed loop.
    This last fact can be seen, for instance, by recalling that the isomorphism between de Rham and singular cohomology is given by integration on chains (see, e.g., \cite[Chapter~1]{dupont}).
    This shows that the definition of $f$ is independent of the choices of paths $p$.
    In particular, we have
    \begin{equation}
        \label{equation:f-independent-path}
        f(y_2) = f(y_1) + \left(\int_0^1 \theta_{p(t)}(p'(t)) dt\right) \mod \mathbb{Z}
    \end{equation}
    for any smooth path $p : [0,1] \to \Mcal$ between $y_1$ and $y_2$.
    Now, if $y_1,y_2 \in \Mcal$ are sufficiently close, then there exists $y \in \Mcal$ and a smooth family of paths starting with a path between $y$ and $y_1$ and ending with a path between $y$ and $y_2$.
    Since definition of $f$ is independent of the chosen paths, this shows that $f$ is smooth.

    To conclude, let $v \in \tangent_y \Mcal$ and let $p : [-1,1] \to \Mcal$ such that $p(0) = y$ and $p'(0) = v$.
    Let $y_1 = p(-1)$ and $y_2 = p(1)$.
    From \cref{equation:f-independent-path} now follows that $df_y(v) = \theta_y(v)$.
    Thus, $df = \theta$, as required.
\end{proof}

\begin{proof}[Proof of \cref{proposition:analogy-toroidal-coordinates}]
    Without loss of generality, we may assume that $\Mcal$ is connected.
 Then, \cref{lemma:differential-of-circlular-coordinates} implies that there is a bijection between the set of smooth maps $f : \Mcal \to \circle$ up to rotational equivalence on one hand, and the set of closed $1$-forms $\theta \in \Omega^1(\Mcal)$ with $[\theta]$ in the image of $\iota : \simphom(\Mcal;\Zbb) \to \simphom(\Mcal;\Rbb)$ on the other hand.
    Under this correspondence, we have $E(f) = \frac{1}{2} \|\theta\|^2$, by definition.

The correspondence extends to a bijection between the set of smooth maps $f : \Mcal \to \torus^k$ up to composition with a component-wise rotation of $\torus^k = \circle \times \cdots \times \circle$ on one hand, and the set of ordered lists of $k$ closed $1$-forms $\{\theta_1, \dots, \theta_k\} \subseteq \Omega^1(\Mcal)$ with $[\theta_j]$ in the image of $\iota$ for all $1 \leq j \leq k$, on the other hand.
    Under this correspondence, we have $E(f) = \frac{1}{2} \sum_{j = 1}^k \|\theta_j\|^2$, by \cref{lemma:energy-is-sum-of-energies}.
    The result follows.
\end{proof}

\subsection{Proof of \cref{proposition:isometry}}
\label{section:proof-of-isometry}

Let $f,g : \Mcal \to \circle$ be obtained using sparse cocycle integration (\cref{algorithm:sparse-cocycle-integration}) with input cocycles $\theta$ and $\eta$, respectively.
We have
\begin{align*}
    D(f,g) &= \frac{1}{2}\; \langle df, dg \rangle_{\Omega^1}\\
    &= \frac{1}{2}\;\int_{b \in \Mcal}  \langle df_b, dg_b \rangle_F \;\dsf \mu(b)\\
    &=  \frac{1}{2}\; \sum_{w \in I}\int_{b \in \Mcal} \langle df_b, dg_b \rangle_F \; \phi_w(b) \;\dsf \mu(b)\\
    &=  \frac{1}{2}\; \sum_{w \in I} \left( \int_{b \in \Mcal} \left\langle \sum_{y \in I} d(\phi_y)_b \theta^{wy}\;,\; \sum_{z \in I} d(\phi_z)_b \eta^{wz}\right\rangle_F \; \phi_w(b) \;\dsf \mu(b)\right)\\
    &=  \frac{1}{2}\; \sum_{w,y,z \in I} \left( \int_{b \in \Mcal} \;\langle d(\phi_y)_b, d(\phi_z)_b \rangle_F\; \phi_w(b)\; \dsf\mu(b)\right) \theta^{wy} \eta^{wz}\\
    &= \frac{1}{2}\; \sum_{w,y,z \in I} D_{wyz} \; \theta^{wy} \eta^{wz},
\end{align*}
as required.

\section{Estimating the Dirichlet form of arbitrary circle-valued maps}\label{section:estimatingdirichlet}
We give a heuristic for estimating the Dirichlet form between arbitrary circle-valued maps on a Riemannian manifold.
Formally addressing the consistency of this heuristic is left for future work.

\begin{construction}
    \label{construction:heuristic-dirichlet-form}
    Let $X \subseteq \Rbb^n$ be a finite sample of a smoothly embedded closed manifold $\Mcal \subseteq \Rbb^n$.
    Given $f,g : X \to \circle$, restrictions of smooth maps $\tilde{f},\tilde{g} : \Mcal \to \circle$, we seek to estimate $D(\tilde{f},\tilde{g})$.
    \begin{enumerate}
        \item Form a neighborhood graph $G$ on $X$.
        For instance, this can be done by selecting $k \in \Nbb$ and using an undirected $k$-nearest neighbor graph.
        \item Compute weights $h(a,b) \geq 0$ for the edges $(a,b) \in G$.
        For instance, this can be done by selecting a radius $\delta > 0$ and letting $h(a,b) = \exp(-\|a-b\|^2/\delta^2)$.
        \item Note that $\Rbb \to \circle$ restricts to a bijection $[-1/2, 1/2) \to \circle$ and let $l :  \circle \to [-1/2, 1/2)$ denote its inverse.
        \item For $a \in X$, let $N(a) = \{b \in G \mid (a,b) \in G\}$, and define 
        \[
            \estdirichlet(f,g) \coloneqq \sum_{a \in G}  \left(  \frac{1}{N(a)}\sum_{b \in N(a)} h(a,b) \; l(f(b) - f(a)) \; l(g(b) - g(a)) \right).
        \]
    \end{enumerate}
\end{construction}

We next define the notion of Dirichlet correlation matrix that we use to quantify the correlation between circle-valued map in the examples.

\begin{definition}
Given $f_1, \dots, f_k : X \to \circle$, we define its \define{Dirichlet correlation matrix} $D(f_1, \dots, f_k)$ as the matrix with entry $(i,j)$ given by $\estdirichlet(f_i,f_j)$.
\end{definition}

\section{Details about examples}

\subsection{Sliding Window Persistence}\label{section:slidingwindow}
Given a vector-valued function $F\colon \mathbb{R}\to\mathbb{R}^N$ and parameters $d,\tau\in\mathbb{N}$, the sliding window embedding $SW_{d,\tau}F\colon \mathbb{R}\to\mathbb{R}^{N\times(d+1)}$ of $F$ is defined as follows:
\begin{equation}\label{equation:slidingwindow}
    SW_{d,\tau}F(t) = \begin{bmatrix}
    F(t) & F(t+\tau) & F(t+2\tau) & \cdots & F(t+d\tau)
\end{bmatrix}^T
\end{equation}
where $d \in \Nbb$ is the embedding dimension and $\tau \in \Nbb$ is the time delay. 
If $F$ is assumed to be an observation of a dynamical system,  then for appropriate choices of parameters $d,\tau$, the collection $\mathbb{SW}_{d,\tau}F\coloneqq \{SW_{d,\tau}F(t) \; | \; t \in \Rbb\}$ is topologically equivalent to the observed trajectory. This is a consequence of Takens' theorem \cite{takens1981detecting}. The $1$-sliding window persistence is the persistent (co)homology of the Vietoris-Rips filtration of a finite $L \subset \mathbb{SW}_{d,\tau}F$. See \cite{tralie2018quasi} for a treatment of sliding window embeddings of vector-valued functions.

 

\subsection{Construction of neuroscience data}
\label{section:neuro-data}

Our synthetic dataset is constructed as follows.
For $i=1,2,3$, 
let $C_i$ denote a circle of circumference $1$ on which we have placed $6$ uniformly distributed sensors that fire at a rate inversely proportional to the distance of some stimulus on $C_i$. 
On each circle, we take $50 = n_T$ random walks of $T = 50$ steps and record the sensor responses as an $N \times (n_T \times T)$ matrix.
For the sensor response, we use $r(d) = \max(0,1-3d)$, where $d$ is the circular distance from the sensor to the position of the walk at a given time step.

\end{document}